\newenvironment{algorithm}[3]{%
\bigskip
\noindent\textbf{Algorithm #1}({\itshape#2\/}) {\itshape#3\/}
\begin{description}}{%
\end{description}\medskip}
\DeclarePairedDelimiter\paren{\lparen}{\rparen}
\DeclarePairedDelimiter\abs{\lvert}{\rvert}
\DeclarePairedDelimiter\set{\{}{\}}
\DeclarePairedDelimiterX\setc[2]{\{}{\}}{\,#1 \;\colon\; #2\,}
\DeclarePairedDelimiterX\parenc[2]{\lparen}{\rparen}{\,#1 \;\delimsize\vert\; #2\,}
\tikzstyle{sq}=[fill,thick,inner sep=0pt,shape=rectangle,minimum size=1.5mm]
\tikzstyle{ci}=[fill,thick,inner sep=0pt,shape=circle,minimum size=1.5mm]
\tikzstyle{di}=[fill,thick,inner sep=0pt,shape=diamond,minimum size=2.12132mm]
\newcommand{\cc}[1]{\ensuremath{\mathrm{#1}}}
\newcommand{\pp}[1]{\textup{#1}}
\newcommand{\op}[1]{\ensuremath{\operatorname{#1}}}
\newcommand{\sharpP}{\cc{\#P}}
\newcommand{\FPT}{\cc{FPT}}
\newcommand{\W}{\cc{W[1]}}
\newcommand{\pW}{\oplus\W}
\newcommand{\sharpW}{\cc{\#W[1]}}
\newcommand{\ETH}{\ensuremath{{\mathrm{ETH}}}}
\newcommand{\poly}{\op{poly}}
\newcommand{\N}{\mathbf{N}}
\newcommand{\Z}{\mathbf{Z}}
\newcommand{\dotcup}{\mathbin{\dot\cup}}
\newcommand{\zo}{\set{0,1}}
\DeclareDocumentCommand{\restrict}{O{}}{\mathord{\restriction}_{#1}}
\newcommand{\classH}{\ensuremath{{\mathcal{H}}}}
\newcommand{\Aut}{\ensuremath{{\mathrm{Aut}}}}
\newcommand{\Sub}{\ensuremath{{\mathrm{Sub}}}}
\newcommand{\vcSub}{\ensuremath{{\mathrm{VertexColorfulSub}}}}
\newcommand{\Emb}{\ensuremath{{\mathrm{Emb}}}}
\newcommand{\id}{\ensuremath{{\mathrm{id}}}}
\newcommand{\Mod}{\ensuremath{{\mathrm{Mod}}}}
\newcommand{\ModSub}[2]{\ensuremath{\#\Sub(#1)\bmod #2}}
\newcommand{\haf}{\operatorname{haf}}
\newcommand{\vc}{\operatorname{vc}}
\title{Modular counting of subgraphs: \\Matchings, matching-splittable graphs, and paths}
\titlerunning{Modular counting of subgraphs}
\author{Radu Curticapean}{IT University of Copenhagen, Denmark \and Basic Algorithms Research Copenhagen (BARC), Denmark \and \url{https://www-cc.cs.uni-saarland.de/curticapean/}}{racu@itu.dk}{https://orcid.org/0000-0001-7201-9905}{}
\author{Holger Dell}{Goethe University Frankfurt, Germany \and IT University of Copenhagen, Denmark \and Basic Algorithms Research Copenhagen (BARC), Denmark \and \url{https://holgerdell.com/}}{hold@itu.dk}{https://orcid.org/0000-0001-8955-0786}{}
\author{Thore Husfeldt}{IT University of Copenhagen, Denmark \and Basic Algorithms Research Copenhagen (BARC), Denmark \and Lund University, Sweden \and \url{https://thorehusfeldt.com/}}{hold@itu.dk}{https://orcid.org/0000-0001-9078-4512}{}
\authorrunning{R. Curticapean, H. Dell, and T. Husfeldt}
\keywords{Counting, matchings, paths, parameterized complexity} 
\begin{document}

\maketitle

\begin{abstract}
  We systematically investigate the complexity of counting subgraph patterns \emph{modulo fixed integers}.
  For example, it is known that the \emph{parity} of the number of $k$-matchings can be determined in polynomial time by a simple reduction to the determinant.
  We generalize this to an $n^{f(t,s)}$-time algorithm to compute modulo~$2^t$ the number of subgraph occurrences of patterns that are $s$ vertices away from being matchings.
  This shows that the known polynomial-time cases of subgraph \emph{detection} (Jansen and Marx, SODA 2015\nocite{JansenMarx}) carry over into the setting of \emph{counting modulo~$2^t$}.
  Complementing our algorithm, we also give a simple and self-contained proof that counting $k$-matchings modulo odd integers $q$ is $\Mod_q \W$-complete and prove that counting $k$-paths modulo $2$ is $\pW$-complete, answering an open question by Björklund, Dell, and Husfeldt~(ICALP~2015)\nocite{DBLP:conf/icalp/BjorklundDH15}.
\end{abstract}


\section{Introduction}

The last two decades have seen the development of several complexity dichotomies for pattern counting problems in graphs, including full classifications for counting \emph{subgraphs}, \emph{induced subgraphs}, and \emph{homomorphisms} from fixed computable pattern classes $\mathcal H$. 
The input to such problems is a \emph{pattern} graph $H \in \mathcal H$ and an unrestricted \emph{host} graph $G$; 
the task is to count the relevant occurrences of $H$ in $G$. 
Depending on $\mathcal H$, these problems are known to be either polynomial-time solvable or $\sharpW$-hard when parameterized by $|V(H)|$.
The latter rules out polynomial-time algorithms under the complexity assumption $\FPT \neq \sharpW$.

In this paper, we focus on counting \emph{subgraphs} from any fixed graph class $\mathcal H$.
On the positive side, given a pattern graph~$H\in\mathcal H$ whose smallest vertex-cover has size $\vc(H)$ and an $n$-vertex host graph~$G$, there are known $O(n^{\vc(H)+1})$ time  algorithms~\cite{DBLP:journals/siamcomp/WilliamsW13,DBLP:journals/siamdm/KowalukLL13,DBLP:conf/focs/CurticapeanM14} to count subgraphs of $G$ that are isomorphic to $H$:
First, find a minimum vertex-cover~$C$ of~$H$ using exhaustive search. Then, iterate over all possible embeddings~$f$ of~$H[C]$ into~$G$ and count the possible extensions of $G[f(C)]$ to a full copy of~$H$.
Complementing this algorithm, an almost matching running time lower bound of $n^{\Omega(\vc(H) / \log \vc(H))}$ under the exponential-time hypothesis~($\ETH$) is also known~\cite{DBLP:conf/stoc/CurticapeanDM17}.
Thus, assuming $\ETH$ or $\FPT \neq \sharpW$, the problem $\#\Sub(\mathcal H)$ of counting subgraphs from a fixed class~$\mathcal H$ is polynomial-time solvable if and only if the vertex-cover numbers (or equivalently, the maximum matching sizes) of the graphs in~$\mathcal H$ are bounded by a constant.
The rightmost column of \cref{fig: overview} visualizes this situation.

Turning from counting to the problem $\Sub(\mathcal H)$ of \emph{detecting} subgraphs from fixed classes~$\mathcal H$, the picture is less clear.
Evidence points at three strata of complexity:
Define the \emph{matching-split number} of $H$ to be the minimum number of vertices whose deletion turns $H$ into a matching, that is, a graph of maximum degree $1$.
Jansen and Marx~\cite{JansenMarx} show that, if this number is bounded in a graph class~$\mathcal H$, then $\Sub(\mathcal H)$ is polynomial-time solvable.
For classes~$\mathcal H$ of bounded tree-width, it is known~\cite{DBLP:conf/wg/PlehnV90,DBLP:journals/algorithmica/AlonYZ97,DBLP:journals/jcss/FominLRSR12} that the problem $\Sub(\mathcal H)$ is fixed-parameter tractable when parameterized by~$|V(H)|$.
For pattern classes $\mathcal H$ of unbounded tree-width, it is conjectured that $\Sub(\mathcal H)$ is W[1]-hard---so far, this hardness has only been established for cliques, bicliques~\cite{DBLP:journals/jacm/Lin18}, grids~\cite{DBLP:conf/wg/ChenGL17}, and less natural graph classes.
The leftmost column of \cref{fig: overview} visualizes the situation.

We propose to study an intermediate setting between decision and counting, namely, counting subgraph patterns \emph{modulo fixed integers} $q \in \N$.
Modular counting has a tradition in classical complexity theory, where the complexity classes $\Mod_q \cc{P}$ for $q \in \N$ capture problems that ask to count accepting paths of polynomially time-bounded non-deterministic Turing machines modulo~$q$.
In particular, the class $\Mod_2 \cc{P}$ (better known as~$\cc{\oplus P}$) plays a central role in the proof of Toda's theorem~\cite{Toda}.
Several (partial) classification results for frameworks of modular counting problems are known; this includes homomorphisms to fixed graphs~\cite{DBLP:journals/toc/FabenJ15,DBLP:journals/toct/0001GR14,DBLP:journals/toct/0001GR16,DBLP:conf/mfcs/KazeminiaB19,doi:10.1137/1.9781611976465.137}, constraint satisfaction problems~\cite{DBLP:journals/corr/abs-0809-1836,DBLP:conf/stacs/GuoHLX11}, and Holant problems~\cite{DBLP:conf/focs/CurticapeanX15}.

\cref{fig: overview} summarizes our understanding.
If the vertex-cover number is bounded, the polynomial-time algorithms (regions 7 and 8) follow from the algorithm for $\#\Sub(\mathcal H)$ described above and require no further attention.
Our paper is concerned with the remaining regions~\mbox{1--6}.

As argued above, matchings play a central role in decision and counting, so it is natural that they reprise their role in modular subgraph counting:
On the positive side, there are known polynomial-time algorithms for counting matchings of a given size modulo fixed powers of two. (For bipartite graphs and counting modulo $2$, this essentially follows from the fact that determinant and permanent coincide modulo~$2$.)
On the negative side, if $q$ is not a power of two, counting matchings modulo $q$ is known to be $\Mod_p \cc{P}$-complete for any odd prime $p$ dividing $q$.
We establish a parameterized analogue of this fact: 
Let $\Mod_q\W$ be the class of parameterized problems that are fpt-reducible to counting $k$-cliques modulo~$q$.
We show that counting $k$-matchings (that is, sets of $k$ pairwise disjoint edges) in graphs modulo fixed odd primes $q \in \N$ is $\Mod_q\W$-hard.
In our proof, modular counting allows us to sidestep the algebraic machinery from previous works~\cite{DBLP:conf/focs/CurticapeanM14,DBLP:phd/dnb/Curticapean15,DBLP:conf/stoc/CurticapeanDM17}, resulting in a surprisingly simple and self-contained argument.
\begin{theorem}\label{thm: hardness-matchings}
  For any integer $q \in \N$ containing an odd prime factor $p$,
  counting $k$-matchings modulo $q$ is $\Mod_p\W$-hard under Turing fpt-reductions and admits no $n^{o(k/\log k)}$ time algorithm under $\ETH$.
\end{theorem}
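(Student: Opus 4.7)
The plan is to give a Turing fpt-reduction from the canonical $\Mod_p\W$-hard problem---counting $k$-cliques modulo~$p$---to counting $k'$-matchings modulo~$q$ for some $k' = O(k^2)$. An oracle for $k'$-matchings modulo~$q$ yields, after reduction modulo~$p$, the count of matchings modulo~$p$, so such a reduction suffices. The $\ETH$ lower bound then follows from the classical $n^{\Omega(k/\log k)}$ lower bound for clique counting: since $k' = \Theta(k^2)$, one has $k/\log k = \Theta(\sqrt{k'}/\log \sqrt{k'}) = \Theta(\sqrt{k'}/\log k')$, which is compatible with the stated bound measured in the clique parameter.

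I would proceed in two steps. In \emph{Step 1}, I reduce to the colored variant of counting colorful $k$-cliques in a vertex-$k$-colored graph $(G, V_1, \ldots, V_k)$ modulo~$p$, using the standard inclusion-exclusion identity that expresses colorful clique counts as a signed sum of unpartitioned clique counts in induced subgraphs. This runs in polynomial time and preserves the parameter. In \emph{Step 2}, I design a gadget construction $G \mapsto G'$: each vertex $v$ of color~$i$ is replaced by a small vertex-gadget $X_v$ enforcing a fixed incidence degree, and each edge $\{u,v\}$ is attached via a selector gadget $Y_{uv}$. The parameter $k'$ is chosen so that every $k'$-matching of $G'$ must activate exactly one edge-gadget per pair of color classes $(V_i, V_j)$. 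The target property of the gadgets is that \emph{consistent} matchings---those whose activated edge-gadgets glue together into a colorful $k$-clique of $G$---are counted with a multiplicity coprime to~$p$, while every \emph{inconsistent} matching contributes a multiple of~$p$.

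The main obstacle is the gadget design and the verification of this divisibility property. This is precisely where modular counting buys simplicity: I do not need the exact integer cancellation that an unrestricted counting reduction would require (and which, in previous algebraic approaches, called for Vandermonde-style interpolation over the matching polynomials of gadgets). Instead, I can realize the needed cancellation by equipping each gadget with a free $\Z/p\Z$-action on the set of inconsistent matching extensions: such an action partitions the inconsistent matchings into orbits of size~$p$, so their total count vanishes modulo~$p$. Once a gadget with this symmetry is exhibited, consistent matchings are counted up to a unit mod~$p$, and the bookkeeping of the overall count and the parameter blow-up are routine.
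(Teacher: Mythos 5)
Your high-level architecture (gadgets whose local matching counts are units or multiples of the odd prime, applied to a colorful pattern problem) is the same as the paper's, but as written the proposal has two genuine gaps. First, the $\ETH$ part of the theorem does not follow from a reduction out of $k$-clique with parameter blow-up $k'=\Theta(k^2)$: an $n^{o(k'/\log k')}$-time matching oracle would then only yield an $n^{o(k^2/\log k)}$-time clique counter, which contradicts nothing, so your reduction rules out only $n^{o(\sqrt{k'}/\log k')}$ algorithms --- strictly weaker than the claimed bound. You concede this yourself when you say the bound is ``measured in the clique parameter''; the theorem's bound is measured in the matching parameter. The paper avoids this by starting not from cliques but from $\#\vcSub(\mathcal H)\bmod q$ for the class $\mathcal H$ of $3$-regular graphs (unbounded tree-width, so \cref{cor: hardness-tw} applies with parameter $k=\abs{E(H)}$), for which the vertex- and edge-gadget count, and hence the number of matching colors, is only $O(k)$.

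Second, the gadgets are the entire technical content and are left unconstructed, and the design principle you propose --- a \emph{free} $\Z/p\Z$-action on all inconsistent local extensions, so that every bad configuration cancels gadget-locally --- is stronger than what the paper's gadgets achieve and is not obviously realizable. In the paper's consistency gadget, only the cases $S\in\{\set{u_2},\set{u_3},\set{u_2,u_3}\}$ vanish modulo $q$; the three cases containing $u_1$ are explicitly \emph{not} handled locally and require a separate global argument (\cref{claim: irrelevant-matchings}) that pairs up inconsistent matchings \emph{across} different gadgets using the edge-coloring. Relatedly, your claim that choosing $k'$ appropriately forces every $k'$-matching of $G'$ to ``activate exactly one edge-gadget per pair of color classes'' is unjustified: a $k'$-matching can take two edges from one gadget class and none from another. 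The paper enforces this by working with \emph{edge-colorful} matchings throughout and only at the very end removing the colors via inclusion-exclusion over color subsets ($\#\pp{ColMatch}(G,c)=\sum_{S\subseteq\mathcal C}(-1)^{\abs S}\#\pp{Match}(G_S,k)$); some such step is unavoidable in your reduction as well.
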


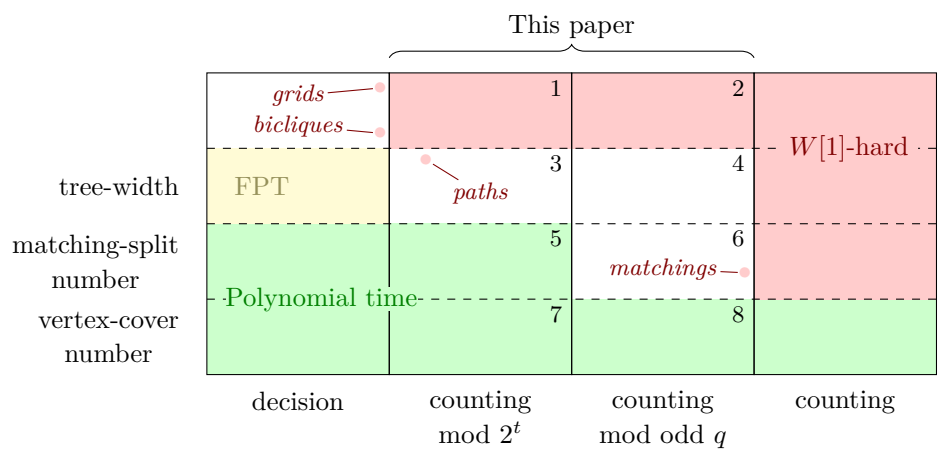
\begin{figure}[t]
  \begin{tikzpicture}[xscale=2.4]
    \draw [decorate,decoration={brace,amplitude=5pt}] (1,4.2) --  node [midway, yshift=1.2em] {This paper} (3,4.2);
    \fill [green!20] (0,0) rectangle (4,1);
    \fill [green!20] (0,1) rectangle (2,2);
    \fill [red!20] (1,3) rectangle (4,4);
    \fill [red!20] (3,1) rectangle (4,4);
    \fill [yellow!20] (0,2) rectangle (1,3);
    \draw (0,0) rectangle (1,4);
    \draw (1,0) rectangle (2,4);
    \draw (2,0) rectangle (3,4);
    \draw (3,0) rectangle (4,4);
    \draw [dashed] (0, 1) --(4,1);
    \draw [dashed] (0, 2) --(4,2);
    \draw [dashed] (0, 3) --(4,3);
    \node at (-.1,.5) [anchor = east, align=center] {vertex-cover\\number};
    \node at (-.1,1.5) [anchor = east, align=center] {matching-split\\number};
    \node at (-.1,2.5) [anchor = east] {tree-width};
    \node at (.5, -.1) [anchor = north] {decision};
    \node at (1.5, -.1) [anchor = north,align=center, text width = 2cm] {counting mod $2^t$};
    \node at (2.5, -.1) [anchor = north, align=center,text width = 2cm] {counting mod odd $q$};
    \node at (3.5, -.1) [anchor = north] {counting};
    \node (pt) at (1.5, 2.4) [inner sep=1pt, black!50!red] {\small\emph{paths}};
    \node (p) at (1.2, 2.85) [inner sep=0, red!20] {$\bullet$};
    \draw [black!50!red] (p) -- (pt);
    \node (mt) at (2.5, 1.4) [inner sep=1pt, black!50!red] {\small\emph{matchings}};
    \node (m) at (2.95, 1.35) [inner sep=0, red!20] {$\bullet$};
    \draw [black!50!red] (m) -- (mt);
    \node (ct) at (0.5, 3.7) [inner sep=1pt, black!50!red] {\small\emph{grids}};
    \node (c) at (0.95, 3.8) [inner sep=0, red!20] {$\bullet$};
    \draw [black!50!red] (c) -- (ct);
    \node (gt) at (0.5, 3.3) [inner sep=1pt, black!50!red] {\small\emph{bicliques}};
    \node (g) at (0.95, 3.2) [inner sep=0, red!20] {$\bullet$};
    \draw [black!50!red] (g) -- (gt);
    \node at (3.9, 3) [black!50!red, fill=red!20, anchor=east] {$W[1]$-hard};
    \node at (.1, 1) [black!50!green, fill=green!20, anchor=west, inner sep=0pt] {Polynomial time};
    \node at (.1, 2.5) [black!50!yellow, anchor=west] {FPT};
    \foreach \position/\text in { (1,0)/7, (2,0)/8, (1,1)/5, (2,1)/6, (1,2)/3, (2,2)/4, (1,3)/1, (2,3)/2 }
      \node at \position [xshift = 2.4cm, yshift = .8cm, anchor=east] {\small\text};
  \end{tikzpicture}

  \caption{\label{fig: overview}%
  An overview over known results and our new results.
  The columns correspond, from left to right, to the problem types $\Sub(\mathcal H)$, $\ModSub{\mathcal H}{2^t}$, $\ModSub{\mathcal H}{q}$ for $q\ne 2^t$, or $\#\Sub(\mathcal H)$; our results are depicted in the two middle columns.
  The rows correspond, from bottom to top, to requiring $\mathcal H$ to have bounded vertex-cover number, matching-split number, tree-width, or no requirement at all.
  The complexity along each row is monotone: By \cref{lem:decision-to-parity}, decision is no harder than modular counting, and modular counting trivially is no harder than counting.
  Our results are depicted in the middle two columns:
  Regions~1 and~2 are \cref{cor: hardness-tw}.
  Region~5 is \cref{thm: algorithm-split}.
  Regions~7 and~8 already follow from~\cite{DBLP:journals/siamcomp/WilliamsW13}.
  The point in region~6 is \cref{thm: hardness-matchings}, and the point in region~3 is \cref{thm: hardness-paths}.
  We view the hardness of these points as evidence to conjecture their enclosing regions to be hard, see \cref{conjecture}.
  }
\end{figure}

Known arguments from Ramsey theory (see~\cite[Section~5]{DBLP:journals/corr/CurticapeanM14}) extend \cref{thm: hardness-matchings} from matchings to $\ModSub{\mathcal H}{q}$ for any hereditary class~$\mathcal H$ of unbounded vertex-cover number.
This suggests that modular subgraph counting may only become tractable when the modulus is a power of two.
Indeed, we show that patterns of matching-split number $s$ can be counted modulo $q = 2^t$ in time $n^{O(t4^s)}$.
To prove this, we follow the general idea of the bounded vertex-cover number algorithm for $\Sub(\mathcal H)$ outlined before, and we reduce to counting matchings modulo powers of two. 
This however requires us to overcome technical complications to avoid unwanted cancellations.
Overall, we obtain:
\begin{theorem}\label{thm: algorithm-split}
  There is an algorithm that, given a graph $H$ of matching-split number $s \in \N$ and an $n$-vertex graph $G$,
  computes the number of $H$-isomorphic subgraphs of~$G$ modulo $2^t$ in time $n^{O(t4^s)}$.
\end{theorem}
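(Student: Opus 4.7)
\medskip
\noindent\textbf{Overall strategy.}
Following the $O(n^{\vc(H)+1})$-template recalled in the introduction, first find a minimum matching-split set $S \subseteq V(H)$ with $\abs{S}=s$ by brute-force search over $\binom{\abs{V(H)}}{s}$ candidates, in time $\abs{V(H)}^{O(s)}$. By construction, $M := H - S$ has maximum degree one and is a disjoint union of matching edges and isolated vertices. Then enumerate the at most $n^s$ injective homomorphisms $f\colon H[S] \to G$, and for each $f$ compute modulo $2^t$ the number $N(f)$ of injective homomorphisms $H \to G$ that restrict to $f$ on $S$. Aggregating gives $\sum_f N(f) = \#\Emb(H, G) = \abs{\Aut(H)} \cdot \#\Sub(H, G)$, and the task reduces to (i) computing each $N(f)$ modulo $2^t$ in time $n^{O(t 4^s)}$, and (ii) extracting $\#\Sub(H,G) \bmod 2^t$ from the aggregated sum.

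\medskip
\noindent\textbf{From $N(f)$ to a modular matching count.}
Fix $f$. Every vertex $v \in V(M)$ carries a \emph{type} $\tau(v) := N_H(v) \cap S \subseteq S$, taking one of at most $2^s$ values; every matching edge $\{u,v\}$ of $M$ carries an \emph{edge type} $\{\tau(u), \tau(v)\}$, with $O(4^s)$ possibilities. An extension $g\colon V(M) \to V(G) \setminus f(S)$ is admissible iff $g$ is injective, each $g(v)$ lies in $A_v := \bigcap_{u \in \tau(v)} N_G(f(u)) \setminus f(S)$, and each matching edge of $M$ is mapped to an edge of $G$. Partitioning $V(M)$ and $E(M)$ by type reduces the counting of admissible $g$ to counting \emph{colored matchings} in an auxiliary graph, where both vertices and matching edges carry one of $O(2^s)$ or $O(4^s)$ colors with prescribed multiplicities. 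The known polynomial-time algorithm for counting $k$-matchings modulo $2^t$ — which corresponds to the base case $s = 0$ of the theorem — extends via standard colorful inclusion-exclusion to this setting and delivers $N(f) \bmod 2^t$ in time $n^{O(t 4^s)}$.

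\medskip
\noindent\textbf{Main obstacle: dividing by $\abs{\Aut(H)}$.}
The decisive difficulty, alluded to in the excerpt as avoiding \emph{unwanted cancellations}, is recovering $\#\Sub(H,G) \bmod 2^t$ from $\#\Emb(H, G) \bmod 2^t$. The $S$-pointwise stabiliser of $\Aut(H)$ — generated by swaps of equal-type endpoints within a matching edge, transpositions of $\tau$-equivalent matching edges, and transpositions of $\tau$-equivalent isolated vertices — can have 2-adic order $\Omega(\abs{V(M)})$, so the naive remedy of working modulo $2^{t+v_2(\abs{\Aut(H)})}$ would inflate the matching subroutine to time $n^{O(\abs{V(H)})}$, far outside the claimed bound. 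These 2-torsion symmetries must therefore be cancelled \emph{structurally} inside the colored-matching computation, by refining the color classes so that each copy $H' \cong H$ in $G$ contributes with a multiplicity whose even part exactly matches that of $\abs{\Aut(H)}$; the residual odd cofactor, which depends only on $H$, is then inverted modulo $2^t$. Executing this refinement while keeping the exponent at $O(t 4^s)$, and in particular accounting for global automorphisms of $H$ that move $S$ setwise, is where the proof's real combinatorial work lies and is in my view the main technical hurdle.
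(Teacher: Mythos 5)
Your outline of the first two phases (guess the image of a size-$s$ splitting set, classify the leftover vertices and matching edges by their neighborhood types in the image, and reduce to a colored matching count with prescribed multiplicities modulo $2^t$) matches the paper's algorithm in spirit. But the third phase contains a genuine gap, and you have in effect flagged it yourself: you reduce to computing $\#\Emb(H,G)=\abs{\Aut(H)}\cdot\#\Sub(H,G)$ and then must divide by $\abs{\Aut(H)}$, which is impossible modulo $2^t$ when $\abs{\Aut(H)}$ is even. Your proposed fix --- ``refine the color classes so that each copy contributes with multiplicity whose even part matches that of $\abs{\Aut(H)}$'' --- is not carried out, and you explicitly leave open how to handle automorphisms that move the splitting set setwise. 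As written, the proof does not go through.

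The paper's resolution avoids the division entirely, via two ideas absent from your proposal. First, it \emph{rigidizes} the splitting set: every graph of matching-split number $s$ has a splitting set $R$ with $\abs{R}=O(s^2)$ such that $f(R)=R$ for every $f\in\Aut(H)$ (\cref{lem:rigidize}; $R$ is obtained by adding to $S$ the neighbors of its low-degree vertices and then pruning components of size at most two). Rigidity ensures that each $H$-copy $F$ in $G$ determines a \emph{unique} candidate image set $S=\sigma'(R)$, independent of which isomorphism $\sigma'$ one picks. Second, for each such $S$ the algorithm does not sum over all embeddings of $H[R]$ into $G[S]$ but only over one representative per orbit of the right action of $\Aut(H)$ on $\Emb(H[R],G[S])$. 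With these two devices the map from $H$-copies to tuples $(S,\sigma,M,c_M)$ becomes a bijection, so every copy is counted exactly once and no inverse of $\abs{\Aut(H)}$ is ever needed. A secondary remark: your appeal to ``standard colorful inclusion-exclusion'' for the demand-constrained matching count is also thinner than what is required; the paper extracts the relevant coefficient from a Hafnian over a multivariate polynomial ring modulo $2^t$ (via Kronecker substitution into the Hirai--Namba algorithm), after first converting matchings with unmatched vertices into perfect matchings using ODD/EVEN gadgets. That part is routine once set up correctly, but the rigidization and orbit-representative argument is the essential missing idea.
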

We complement this result in two ways:
First, we observe that $\oplus\Sub(\mathcal H)$ is $\pW$-complete for pattern classes $\mathcal H$ of unbounded tree-width;
this follows directly from previous hardness proofs for $\#\Sub(\mathcal H)$.
More interestingly, we establish the $\pW$-completeness of counting $k$-paths modulo~$2$ in undirected graphs, thus solving an open problem from~\cite{DBLP:conf/icalp/BjorklundDH15}, where this problem was considered in the context of Hamiltonian cycle detection, following~\cite{BjorklundHusfeldt}.
\begin{theorem}\label{thm: hardness-paths}
  Counting $k$-paths modulo $2$ is $\pW$-complete.
\end{theorem}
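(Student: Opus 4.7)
The plan is to establish membership of parity-of-$k$-paths in $\pW$ by noting that $\#\Sub(\{P_k : k \in \N\})$ lies in $\sharpW$, whence reducing the count modulo~$2$ trivially places the problem in $\pW$. The main work is hardness. For that, I would reduce from a canonical $\pW$-hard problem; a convenient choice is the parity of colorful $k$-cliques (which follows from the analogous statement for uncolored $k$-cliques via a standard color-coding inclusion-exclusion, noting that modulo $2$ the signs $(-1)^{k-|S|}$ collapse, giving an fpt-Turing equivalence between the colorful and uncolored versions of any $\ModSub{\mathcal H}{2}$ problem). So it suffices to reduce parity of colorful $k$-cliques to parity of $k'$-paths.

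The natural construction uses a ``template'' cycle/path over the $\binom{k}{2}$ pairs of color classes $(i,j)$, with an edge-selection gadget inserted for each pair. Given a $k$-colored host $G$ with color classes $V_1, \ldots, V_k$, build a graph $G'$ consisting of a central backbone with $\binom{k}{2}$ attached gadgets $\Gamma_{ij}$, one per unordered pair of colors, each encoding the $G$-edges between $V_i$ and $V_j$. A $k'$-path in $G'$ (for a suitable $k' = \Theta(k^2)$) must traverse every gadget and thereby select one candidate edge per color pair; the selections are \emph{consistent} if, for each color $i$, all pairs involving $i$ select the same vertex in $V_i$, in which case the selected edges form a colorful $k$-clique of $G$. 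The aim is to arrange $G'$ so that each consistent selection is completed to an odd number of $k'$-paths while each inconsistent selection is completed to an even number.

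The main obstacle is precisely this parity bookkeeping: inclusion-exclusion sign tricks are unavailable over $\mathbf{F}_2$, so cancellations must be produced combinatorially via a fixed-point-free involution on the ``spurious'' completions. I would design each gadget $\Gamma_{ij}$ so that any inconsistent selection admits a local swap---for instance exchanging the roles of two internal detour edges in the gadget witnessing the inconsistency---that is a self-inverse map with no fixed points on the inconsistent completions, while leaving consistent completions rigidly determined (ideally unique per clique). A secondary technical point is that $k'$-paths, unlike walks, must avoid vertex repetitions, which can be handled by making the gadgets vertex-disjoint except at designated interface vertices and routing the path monotonically through the backbone; any ``backtracking'' path can be absorbed by the same involution. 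Once this involution is in place, the count of $k'$-paths in $G'$ modulo $2$ equals the number of colorful $k$-cliques in $G$ modulo $2$, completing the reduction.
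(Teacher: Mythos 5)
Your high-level strategy (gadget-based selection plus a fixed-point-free involution to cancel spurious paths modulo $2$) is exactly the paper's, but the proposal stops precisely where the proof begins: the construction that makes the involution exist is not given, and two concrete obstacles you would hit are not addressed. First, an involution that toggles a gadget between a ``direct'' traversal and a ``detour'' traversal necessarily changes the length of the path, so it cannot be fixed-point-free on the set of paths of one fixed length $k'$; the partner of a length-$\ell$ path has length $\ell\pm 2$. This is why the paper introduces the intermediate problem of counting $s,t$-paths of \emph{flexible} length in $\{k',\dots,f(k')\}$ (\cref{lem:vcsubs-to-dipaths}), proves the cancellation there, and only afterwards collapses back to a single length (\cref{lem:dipath-flexibility}) and to undirected graphs (\cref{lem:dipath-to-path}). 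Without some such mechanism your ``local swap'' has nothing to pair a detouring path with. Second, your choice of source problem makes the consistency requirement much harder than you acknowledge: in a clique template each color $i$ participates in $k-1$ gadgets, so you must force $k-1$ independent selections from $V_i$ to agree. The paper deliberately starts from connected $4$-regular automorphism-free expanders and routes the path along an Eulerian tour, so that each color class is visited exactly \emph{twice} and consistency is enforced by a single bidirected matching between the two occurrences; the entire case analysis of ``jumpy'' and ``bad'' paths is tailored to that two-occurrence structure. Chaining $k-1$ occurrences pairwise is plausible but would require redoing that analysis, and you give no indication of how.

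Two smaller points. Your justification that parity of \emph{colorful} $k$-cliques is $\pW$-hard is backwards: the inclusion--exclusion identity reduces the colorful count to uncolored counts, which is the wrong direction for transferring hardness to the colorful problem, and the naive blow-up reduction in the correct direction multiplies the count by $k!$, which is $0 \bmod 2$. The fact you want is true, but via a parsimonious order-respecting construction (or, as in the paper, via \cref{cor: hardness-tw}). Finally, your membership argument is fine in substance; the paper phrases it as the standard parsimonious fpt-reduction from counting $k$-paths to counting cliques.
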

This result adds to a rich range of previous work on the $k$-path problem, and is of interest outside our framework.
Bodlaender~\cite{BODLAENDER19931} and Monien~\cite{MONIEN1985239} showed that \emph{finding} a $k$-path is fixed-parameter tractable.
In contrast, Flum and Grohe~\cite{DBLP:journals/siamcomp/FlumG04} showed that \emph{exactly counting} $k$-paths is $\sharpW$-hard.
Nevertheless, Arvind and Raman~\cite{ArvindRaman} showed that \emph{approximately counting} $k$-paths, which corresponds to computing the most significant bit(s) of the number of~$k$-paths, is fixed-parameter tractable.
Our \cref{thm: hardness-paths} suggests that the \emph{least} significant bit of the number of $k$-paths is hard to compute.
This is surprising, because some of the most influential fpt-algorithms for finding a $k$-path work over characteristic~$2$,
based on the group algebra framework introduced by Koutis~\cite{Koutis08}.

\medskip

Let us conclude with a general remark on the techniques used in this paper:
Recent works successfully exploited a connection between subgraph counts and (linear combinations of) homomorphism counts to obtain algorithms and hardness results~\cite{DBLP:conf/stoc/CurticapeanDM17,DBLP:conf/esa/Roth17,DBLP:conf/mfcs/DorflerRSW19,DBLP:journals/algorithmica/RothS20,DBLP:conf/focs/Roth0W20}.
For example, the number of $k$-matchings in a graph $G$ is a linear combination of homomorphism counts from $f(k)$ fixed graphs.
Insights on the complexity of counting the homomorphisms occurring in this linear combination then lead to complexity results for counting $k$-matchings.
This connection however does not readily transfer to modular counting, as the relevant linear combinations (which involve rational coefficients) may be \emph{undefined} modulo~$p$. We therefore prove \cref{thm: hardness-matchings,thm: algorithm-split,thm: hardness-paths} using more combinatorial approaches.

\section{Preliminaries}\label{sec: preliminaries}
Unless otherwise stated, we consider finite, undirected, simple graphs without self-loops.

\paragraph*{Subgraph problems}

A \emph{homomorphism} from graph $H$ to graph $G$ is a mapping $\varphi\colon V(H)\to V(G)$ such that $\{\varphi(u),\varphi(v)\} \in E(G)$ for each $\{u,v\}\in E(H)$.
An \emph{embedding} is an injective homomorphism, and we let $\Emb(H,G)$ denote the set of embeddings from~$H$ to~$G$.
An \emph{isomorphism} is a bijective homomorphism,
and an \emph{automorphism} is an isomorphism from $H$ to itself. 
The set of all automorphisms of $H$ is called $\Aut(H)$, and forms a group when endowed with function composition~$\circ$.

We let $\Sub(H,G)$ be the set of all $H$-subgraphs of~$G$, that is, the set of all $H'$ with $V(H')\subseteq V(G)$ and $E(H')\subseteq E(G)$ such that $H'$ is isomorphic to~$H$.
This terminology fixes the possible confusion about isomorphic copies of subgraphs:
For example, there is exactly one $K_k$-subgraph in $K_k$,
but there are $k!$ embeddings.
The \emph{subgraph problem $\Sub$} is given a pair $(H,G)$ to decide whether $G$ has at least one $H$-subgraph.
The \emph{subgraph counting problem $\#\Sub$} is given a pair $(H,G)$ to determine the number of $H$-subgraphs in $G$.

For a graph class~$\classH$, we write $\#\Sub(\classH)$ for the restricted 
problem where the input $(H,G)$ is promised to satisfy $H\in \classH$.
For $q\in\Z_{\ge 2}$, the \emph{modular subgraph counting problem $\ModSub{\classH}{q}$} is the problem to compute the number of $H$-subgraphs modulo~$q$.
In the special case with~$q=2$, we write $\oplus\Sub$.

It will be useful to consider \emph{colorful} subgraph problems, where~$G$ is $H$-colored, that 
is, there is a given homomorphism~$c:V(G)\to V(H)$.
Due to the homomorphism property, we allow edges $\{u,v\}\in E(G)$ only if the 
corresponding colors satisfy $\{c(u), c(v)\} \in E(H)$.
A subgraph~$H'$ of an $H$-colored graph~$G$ is \emph{vertex-colorful} if $c$ is 
bijective on $V(H')$.
Let $\vcSub(H,G)$ be the set of vertex-colorful subgraphs~$H'$ for which $c$ is 
an isomorphism from $H'$ to $H$.
The corresponding computational problems are defined analogously to the 
uncolored case; the input consists of a graph~$G$ together with an 
$H$-coloring~$c$.

\paragraph*{Background from complexity theory}
A \emph{parameterized counting problem} is a pair $(f,\kappa)$ of functions $f,\kappa:\zo^\ast\to\N$ where $\kappa$ is computable.
A \emph{parsimonious fpt-reduction} from a parameterized counting problem~$(f,\kappa)$ to a parameterized counting problem~$(g,\iota)$ is a function~$R$ with the following properties:
(i) $f(x)=g(R(x))$ for all $x\in\zo^\ast$,
(ii) $\iota(R(x))$ is bounded by a computable function in~$\kappa(x)$, and
(iii) the reduction is computable in time $h(\kappa(x))\poly(\abs{x})$ for some computable function~$h$.
A~\emph{Turing fpt-reduction} may query the oracle multiple times for instances whose parameter is bounded by a function of the input parameter, and combine the query answers in fpt-time to produce the correct output.
Moreover, reductions can also be \emph{randomized}, in which case we require that their error probability is bounded by a small constant.

The \emph{exponential-time hypothesis} (ETH) postulates the existence of some~$\varepsilon>0$ such that no algorithm solves $n$-variable~$3$-CNF formulas in time~$O(2^{\varepsilon n})$. We write for short that $3$-CNF-SAT does not have $2^{o(n)}$-time algorithms, and we also disallow bounded-error randomized algorithms.

\paragraph*{Modular counting}
For our purposes, we define the class $\Mod_q\W$ as the class of all parameterized problems $(f,\kappa)$ with $f\colon \Sigma^\ast\to\{0,\dots,q-1\}$ such that $(f,\kappa)$ has a parsimonious fpt-reduction to the problem of counting $k$-cliques modulo~$q$.
For $q=2$, it was shown in~\cite{DBLP:conf/icalp/BjorklundDH15} that 
all problems in $\W$ admit randomized fpt-reductions to problems in $\pW$.
Another result~\cite[Lemma~2.1]{DBLP:conf/soda/WilliamsWWY15} yields the corresponding generalization for all $q>2$.
We use the following analogous proposition for the vertex-colorful subgraph problem, proven in the appendix.

\begin{lemma}\label{lem:decision-to-parity}%
  For any integer $q \geq 2$,
  there is a randomized Turing fpt-reduction from the problem~$\vcSub$ to the problem~$\#\vcSub\bmod q$.
  On input $(H,G)$, the reduction only queries instances with the same pattern $H$.
\end{lemma}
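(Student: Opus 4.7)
The plan is to adapt the classical Valiant--Vazirani isolation argument to the vertex-colorful setting, following the template already used in \cite{DBLP:conf/icalp/BjorklundDH15} for $q=2$ and in \cite[Lemma~2.1]{DBLP:conf/soda/WilliamsWWY15} for general $q$. The idea is that starting from an input $(H,G,c)$ whose colorful copy count $N$ is nonzero, a suitable random subsampling of the color classes of $G$ should, with polynomially bounded success probability, leave behind \emph{exactly one} vertex-colorful $H$-copy, so that the remaining count is $1 \not\equiv 0 \pmod q$. Since the oracle reports counts modulo $q$, any nonzero answer on a subsampled instance certifies $N \geq 1$.

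Concretely, write $V_j := c^{-1}(j)$ for $j \in [k]$ with $k = |V(H)|$. For each scale $t \in \{0,1,\ldots,\lceil k\log |V(G)|\rceil\}$ the reduction draws independent random \emph{restrictions} of $G$: pick integers $t_1,\ldots,t_k$ with $\sum_j t_j = t$ distributed as evenly as possible, sample uniformly random hash functions $h_j\colon V_j \to \{0,1\}^{t_j}$, retain $V'_j := h_j^{-1}(0)$, and let $G' := G[V'_1 \cup \cdots \cup V'_k]$ with inherited coloring $c|_{V(G')}$. Because only vertices are discarded, $(H, G', c|_{V(G')})$ is a valid $H$-colored instance, so every oracle query uses the \emph{same} pattern~$H$, as required by the lemma. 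The reduction then queries the $\#\vcSub \bmod q$ oracle on each such $G'$ and outputs ``yes'' iff some answer is nonzero modulo~$q$. Repeating $\Theta(k \log |V(G)|)$ times per scale keeps the total query count polynomial in $|V(G)|$ while amplifying success probability.

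For correctness, if $N = 0$ then every restriction trivially has no surviving colorful copies and the reduction outputs ``no''. If $N \geq 1$, consider the scale $t^\ast \approx \log_2 N$: every colorful copy survives the associated restriction independently with probability $2^{-t^\ast}$, so the expected number of survivors is $\Theta(1)$. The main obstacle is the second-moment bound: two colorful copies $H_1^\ast, H_2^\ast$ agreeing on a subset $J \subseteq [k]$ of color classes have joint survival probability $2^{-(2t^\ast - s)}$ where $s = \sum_{j \in J} t_j$, so the variance of the number of survivors must be controlled despite the positive correlation induced by shared vertices. A routine calculation in the style of \cite{DBLP:conf/icalp/BjorklundDH15,DBLP:conf/soda/WilliamsWWY15} shows that this correlation is mild enough to give $\Pr[\text{exactly one survivor}] = \Omega(1/k)$ for the correct scale, and amplification delivers the randomized Turing fpt-reduction with constant-bounded error.
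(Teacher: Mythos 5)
Your high-level idea (randomly restrict $G$ and use the $\#\vcSub\bmod q$ oracle to detect a surviving copy) matches the paper's, but the analysis you propose does not go through, and the paper avoids it entirely. The crux of your argument is the claim that at the right scale $t^\ast$, with $t_1,\dots,t_k$ ``distributed as evenly as possible'', the restriction isolates exactly one colorful copy with probability $\Omega(1/k)$. This is false as stated. Consider a sunflower instance in which all $N$ colorful copies use the same vertex in color classes $1,\dots,k-1$ and differ only in class $k$ (so $N$ can be as large as $n$). For exactly one copy to survive, the $k-1$ shared vertices must all survive (probability $2^{-(t-t_k)}$) and exactly one of the $N$ candidates in class $k$ must survive (probability $\Theta(1)$ only when $2^{t_k}\approx N$); with the even split $t_j\approx t/k$ these two requirements force $t\approx k\log N$, whence the shared vertices survive with probability only $N^{-(k-1)}$. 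So the best isolation probability over all scales is $1/\poly(n)$ with the exponent growing with $k$, which cannot be amplified to constant error by $f(k)\poly(n)$ repetitions, and the reduction fails to be an fpt-reduction. The deeper issue, which you flag but do not resolve, is that per-class hashing is not pairwise independent on the solution space, so the Valiant--Vazirani second-moment bound is not ``routine''; at a minimum you would have to enumerate all compositions $t_1+\dots+t_k=t$ rather than fix the even one, and then actually prove an isolation lemma for this correlated family.

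The paper sidesteps isolation altogether. It deletes each \emph{edge} of $G$ independently with probability $1/2$ and observes that the number of colorful $H$-copies in the resulting $G'$ is the evaluation, at a uniformly random point of $\{0,1\}^{E(G)}$, of the nonzero multilinear polynomial $P(x)=\sum_F\prod_{e\in E(F)}x_e$ of degree $\abs{E(H)}$. The cited Lemma~2.1 of Williams et al.\ then gives $\Pr[P(x)\not\equiv 0\pmod q]\ge 2^{-\abs{E(H)}}$, so $O(2^{\abs{E(H)}})$ independent trials suffice --- no uniqueness of the surviving copy is needed and no second moment is computed. If you want to keep a proof in your own style, the shortest repair is to replace the hashing scheme by this one-shot random deletion (of edges, or of vertices together with the degree-$\abs{V(H)}$ polynomial in vertex indicators) and invoke that polynomial lemma; as written, your proof has a genuine gap at its central probabilistic claim.
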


Our work relies on the following hardness result for parameterized modular subgraph counting, which follows easily from known results on the colorful subgraph decision problem~\cite{DBLP:journals/tcs/DalmauJ04,DBLP:journals/toc/Marx10}. See the appendix for a proof.

\begin{lemma}\label{cor: hardness-tw}
  Let $\classH$ be a graph family of unbounded tree-width and let~$q$ be an integer with $q\ge 2$.
  Then $\#\vcSub(\classH)\bmod q$ parameterized by $k=\abs{E(H)}$ is $\Mod_q\W$-hard under parsimonious fpt-reductions.
  Moreover, if ETH is true, then the problem does not have an algorithm running
  in time $n^{o(k/\log k)}$, where~$n=\abs{V(G)}$. 
\end{lemma}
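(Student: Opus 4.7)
The plan is to reduce from uncolored $k$-clique counting modulo~$q$ (the canonical $\Mod_q\W$-hard problem by definition) to $\#\vcSub(\classH)\bmod q$ via two parsimonious arrows: first to colorful $k$-clique counting modulo~$q$, and then to colorful $H$-subgraph counting via Marx's construction~\cite{DBLP:journals/toc/Marx10}. Since both arrows preserve exact solution counts, they transport modular counts without loss, and the $\ETH$ lower bound pulls back through the composition.

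For the first arrow, given $G$ on the ordered vertex set $\{1,\ldots,n\}$ and a parameter $k$, I would build an auxiliary $K_k$-colored graph $G'$ on vertices $V(G) \times [k]$, colouring $(v,i)$ with $i$, and including the edge $\{(u,i),(v,j)\}$ exactly when $\{u,v\}\in E(G)$ and the orderings agree, i.e.\ $i<j$ iff $u<v$. A vertex-colorful $K_k$-subgraph of $G'$ is then forced to take the form $(v_1,1),(v_2,2),\ldots,(v_k,k)$ with $v_1<v_2<\cdots<v_k$ forming a $k$-clique in $G$, so ``take the underlying vertex set'' is an exact bijection with $k$-cliques of $G$. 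This ordering trick is crucial: it avoids the factor-$k!$ overcount of the naive colored-copy reduction, which would be fatal modulo~$q$ when $q$ has small prime factors.

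For the second arrow, I would apply Marx's reduction from colorful $k'$-clique to colorful $H$-subgraph for $H\in\classH$ of large treewidth~$w$: his gadgetry builds an $H$-colored host graph in which colorful $H$-subgraphs are in explicit bijection with colorful $k'$-cliques of the source. Because $\classH$ has unbounded treewidth, $w$ can be driven to infinity along a subsequence of $\classH$, giving $\Mod_q\W$-hardness after composing with the first arrow. The conditional lower bound of $n^{\Omega(k'/\log k')}$ for colorful clique under $\ETH$ then pulls back through both arrows to yield the claimed $n^{o(k/\log k)}$ bound, after one restricts attention to patterns from $\classH$ whose treewidth is linearly related to $|E(H)|$, which is always possible by standard extraction arguments in a class of unbounded treewidth. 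The main obstacle I anticipate is to certify that Marx's construction, originally stated for decision, is really a \emph{bijection} rather than merely preserving existence; this parsimonious refinement is standard in the subsequent counting-complexity literature and can be read off directly from the gadget definitions, but it is the one point where the proof needs more than a black-box citation.
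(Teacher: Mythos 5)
Your argument for the first claim ($\Mod_q\W$-hardness) is essentially the paper's: the paper likewise invokes Marx's reduction from $k$-clique to $\vcSub(\classH)$ and observes that it is parsimonious (citing a self-contained write-up in Curticapean's thesis), so it composes with the canonical complete problem for $\Mod_q\W$. Your explicit ordering trick for the uncolored-to-colorful step is a correct and welcome detail --- it is exactly how one avoids the fatal $k!$ factor --- and your worry about certifying bijectivity of Marx's gadgets is legitimate but resolvable by the cited reference. So far, so good.

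The ETH part, however, has a genuine gap. $\ETH$ gives a lower bound for the \emph{decision} problem $\vcSub(\classH)$ (via Marx), not for any counting-mod-$q$ problem. Your chain of parsimonious reductions runs entirely between \emph{counting} problems, so for the lower bound to ``pull back'' you would need an ETH lower bound at the source, i.e.\ for counting (colorful) $k$-cliques modulo~$q$ --- and that is not something ETH hands you for free. The obstruction is that the number of solutions can be $\equiv 0 \pmod q$ even when solutions exist, so an oracle for the count modulo $q$ does not decide existence, and no parsimonious reduction from a counting problem can launder the decision lower bound into the modular-counting world. The paper closes exactly this hole with \cref{lem:decision-to-parity}: a \emph{randomized} Turing fpt-reduction from $\vcSub$ to $\#\vcSub \bmod q$, based on random edge-deletion and the isolation-style Lemma~2.1 of Williams, Wang, Williams, and Yu, which guarantees that with probability at least $2^{-\abs{E(H)}}$ the surviving count is nonzero modulo~$q$. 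This is also why the paper's formulation of ETH explicitly rules out bounded-error randomized algorithms. Your proof needs this (or an equivalent derandomized isolation step) to make the second claim go through; as written, the ETH conclusion does not follow.
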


\section{Hardness of counting $k$-matchings}\label{sec: hardness-matchings}

In this section, we prove \cref{thm: hardness-matchings}.
We first establish $\Mod_q\W$-hardness of the problem $\#\pp{ColMatch}\bmod q$ for odd $q \geq 3$:
Given a graph $G$ with an edge-coloring $c: E(G) \to \mathcal C$ for some set of colors $\mathcal C$ with $\abs{\mathcal C}=k$, 
this problem asks to count modulo $q$ the edge-colorful matchings in $G$.
These are the matchings that use each color in $\mathcal C$ exactly once. 

\begin{lemma}\label{lem:reduction-to-colmatch}
  For any fixed integer $p$ with odd prime factor $q$,
  the problem ${\#\pp{ColMatch}\bmod p}$ is $\Mod_q \cc{W[1]}$-hard under parsimonious fpt-reductions and has no $n^{o(k/\log k)}$ time algorithm under $\ETH$.
\end{lemma}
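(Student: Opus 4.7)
The plan is to give a parsimonious fpt-reduction from $\#\vcSub(\classH)\bmod q$ to $\#\pp{ColMatch}\bmod q$, where $\classH$ is a fixed graph class of unbounded tree-width. By \cref{cor: hardness-tw} the source problem is $\Mod_q\W$-hard and admits no $n^{o(k/\log k)}$-time algorithm under ETH; since $q$ divides $p$, an oracle for $\#\pp{ColMatch}\bmod p$ returns the target's value modulo $q$ for free, so hardness modulo $q$ of the target will suffice.

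Given an input $(H,G,c)$, I construct an edge-colored graph $G^*$ whose color set contains one \emph{vertex color} $v_i$ for every $i\in V(H)$ and one \emph{edge color} $e_{ij}$ for every $\{i,j\}\in E(H)$. For each $u\in V(G)$ with $c(u)=i$, I build a \emph{vertex gadget} $V_u$ containing a distinguished color-$v_i$ edge and $\deg_H(i)$ ports, one per neighbor color of $i$ in $H$. For each $\{u,w\}\in E(G)$ with endpoint colors $\{i,j\}\in E(H)$, I build an \emph{edge gadget} $E_{uw}$ containing a distinguished color-$e_{ij}$ edge whose two endpoints are identified with the $j$-port of $V_u$ and the $i$-port of $V_w$. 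In any edge-colorful matching, the unique color-$v_i$ edge lies in a single $V_{u_i}$ and selects $u_i\in c^{-1}(i)$; the unique color-$e_{ij}$ edge lies in a single $E_{uw}$ and selects a $G$-edge of colors $\{i,j\}$. The matching constraint combined with port sharing then forces this selected $G$-edge to be incident to both $u_i$ and $u_j$, so $\{u_1,\ldots,u_k\}$ spans a vertex-colorful $H$-subgraph; the map is reversible, yielding $\#\pp{ColMatch}(G^*)=\#\vcSub(H,G)$.

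The main obstacle is designing the interior of each vertex gadget so that (i) when the gadget's vertex is selected, all ports are free to be matched by edge-gadget edges, and (ii) when the vertex is unselected, the ports must be matched \emph{internally} with a unique completion that contributes no colored edges. A naive design yields an overall multiplicative constant equal to a product of local port counts; to stay parsimonious I would adjoin symmetric gadget copies so that this constant becomes $1\pmod q$, exploiting the fact that $2$ is invertible modulo the odd prime~$q$. The odd modulus is essential here: the cancellation relies on dividing by~$2$, which is unavailable modulo powers of~$2$. Finally, since $|\mathcal C|=O(|V(H)|+|E(H)|)=O(k^2)$ and the construction is polynomial time, this is a parsimonious fpt-reduction that preserves the ETH lower bound from \cref{cor: hardness-tw}.
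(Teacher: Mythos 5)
Your high-level strategy coincides with the paper's: reduce from $\#\vcSub(\classH)\bmod q$ for an unbounded-tree-width class (the paper uses $3$-regular graphs), build per-vertex ``selection'' gadgets and per-edge ``verification'' gadgets glued at shared ports, use edge colors to enforce one selection per color class of $G$, and exploit the oddness of $q$ for cancellation. However, there are two genuine gaps. First, the entire content of the proof is the concrete design of the two gadgets, which you explicitly defer as ``the main obstacle'' and resolve only with a vague promise to ``adjoin symmetric gadget copies.'' The paper's consistency gadget ($q$ auxiliary vertices, with $u_1$ joined to $(q+1)/2$ of them so that the number of matchings saturating all three ports is $\frac{q+1}{2}(q-1)(q-2)\equiv 1\pmod q$, while partial saturations count $0\pmod q$) and its AND-gadget ($q-2$ pendant leaves at each external vertex, so the surviving edge count is $0\pmod q$ unless both external vertices are consumed) are precisely where the hypothesis that $q$ is odd enters; without them the argument is not a proof.

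Second, and more fundamentally, your intended correctness mechanism is flawed. You want unselected vertex gadgets to be ``matched internally with a unique completion that contributes no colored edges,'' yielding an exact bijection $\#\pp{ColMatch}(G^*)=\#\vcSub(H,G)$. But in $\#\pp{ColMatch}$ \emph{every} edge carries a color and an edge-colorful matching must use each color \emph{exactly once}; it therefore has exactly $\abs{\mathcal C}$ edges, and forced internal completions at the many unselected gadgets in the same color class of $G$ would necessarily repeat colors. The only consistent option is that unselected gadgets contribute the \emph{empty} matching --- but then nothing locally forces consistency, colorful matchings that do not correspond to $H$-copies exist, and exact equality fails. The paper accepts this and instead proves (via the $m_S\equiv_q 0$ properties of the gadgets) that all such ``inconsistent'' colorful matchings cancel modulo $q$, so only congruence modulo $q$ --- which suffices for a parsimonious reduction between mod-$q$ counting problems --- is obtained. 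Your proposal neither anticipates the existence of these spurious matchings nor provides the cancellation argument needed to dispose of them.
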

\begin{proof}
  The class $\mathcal H$ of all $3$-regular graphs has unbounded tree-width, and hence by \cref{cor: hardness-tw}, the problem $\#\vcSub(\classH)\bmod q$ is $\Mod_q \cc{W[1]}$-hard and hard under $\ETH$.
  We reduce it to $\#\pp{ColMatch}\bmod q$, implying the hardness of $\#\pp{ColMatch}\bmod p$.
  Let $H\in\classH$ and~$G$ be the input for the reduction with $k=\abs{V(H)}$ and $H$-colored $G$, 
  and let $\setc{V_a}{a \in V(H)}$ be the color classes of $G$,
  with edge-sets $E_{a,b}(G)$ for $ab \in E(H)$.
  Using the gadgets from \cref{fig:matchings mod p}, we construct a graph~$G'$:

  \begin{enumerate}
    \item Each vertex $u \in V(G)$ is replaced by three vertices $u_1$, $u_2$, and $u_3$.
    We insert a \emph{consistency gadget} $Q_u$ at these vertices by adding $q$ gadget vertices, connecting $u_2$ and $u_3$ to all gadget vertices, and $u_1$ to the first $(q+1)/2$ gadget vertices. 
    For $S \subseteq \set{u_1,u_2,u_3}$, let $m_S$ count the matchings in $Q_u$ that match precisely $S$;
    it can be checked that
    \begin{equation}
    \label{eq: consistency-gadget-extensions}
      m_S 
      \equiv_q 
      \begin{cases}
        1 & \text{if $S$ is $\emptyset$ or $\set{u_1,u_2,u_3}$,} 
        \\
        0 & \text{if $S$ is $\set{u_2}$, $\set{u_3}$, or $\set{u_2,u_3}$.}
      \end{cases}
    \end{equation}
    We explicitly ignore the other three cases for $S$, as they will not be relevant.
    
  \item For $\{a,b\} \in E(H)$, suppose that $a$ is the $j$th vertex incident to $b$, and that $b$ is the $i$th vertex incident to~$a$. 
    For each edge $\{u,v\} \in E_{a,b}(G)$ with $u\in V_a$ and $v\in V_b$, we insert an AND-gadget $A_{uv}$ at $\set{u_i,v_j}$. Then for any set $S\subsetneq \set{u_i,v_j}$, the number of edges in $A_{uv} - S$ is divisible by $q$, whereas $A_{uv} - \set{u_i,v_j}$ has exactly one edge.
    \item The edge-colors of $G'$ are defined as follows:
    For~$u\in V_a$ with $a \in V(H)$, we assign color $(\mathtt{CONS},a,i)$ to all edges of $Q_u$ incident to vertex~$u_i$, for $i\in\set{1,2,3}$.
    For each $ab \in E(H)$, we assign color $(\mathtt{AND},ab)$ to all edges in AND-gadgets between edges in $E_{a,b}(G)$.
    Overall, we have $k'=3k+3k/2$ colors.
  \end{enumerate}

  \begin{figure}[tp]
  	
  	    \centering
  	\begin{subfigure}[t]{0.4\textwidth}
  		\centering
\begingroup%
  \makeatletter%
  \providecommand\color[2][]{%
    \errmessage{(Inkscape) Color is used for the text in Inkscape, but the package 'color.sty' is not loaded}%
    \renewcommand\color[2][]{}%
  }%
  \providecommand\transparent[1]{%
    \errmessage{(Inkscape) Transparency is used (non-zero) for the text in Inkscape, but the package 'transparent.sty' is not loaded}%
    \renewcommand\transparent[1]{}%
  }%
  \providecommand\rotatebox[2]{#2}%
  \newcommand*\fsize{\dimexpr\f@size pt\relax}%
  \newcommand*\lineheight[1]{\fontsize{\fsize}{#1\fsize}\selectfont}%
  \ifx\svgwidth\undefined%
    \setlength{\unitlength}{112.5bp}%
    \ifx\svgscale\undefined%
      \relax%
    \else%
      \setlength{\unitlength}{\unitlength * \real{\svgscale}}%
    \fi%
  \else%
    \setlength{\unitlength}{\svgwidth}%
  \fi%
  \global\let\svgwidth\undefined%
  \global\let\svgscale\undefined%
  \makeatother%
  \begin{picture}(1,0.54956523)%
    \lineheight{1}%
    \setlength\tabcolsep{0pt}%
    \put(0,0){\includegraphics[width=\unitlength,page=1]{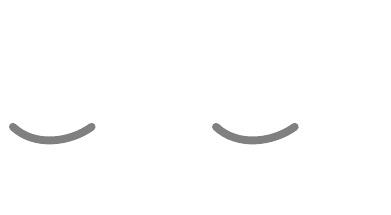}}%
    \put(0.77740926,0.17569708){\color[rgb]{0.50196078,0.50196078,0.50196078}\makebox(0,0)[lt]{\lineheight{1.25}\smash{\begin{tabular}[t]{l}$q-2$\end{tabular}}}}%
    \put(0,0){\includegraphics[width=\unitlength,page=2]{gadgets_and.pdf}}%
    \put(-0.00316839,0.3490201){\makebox(0,0)[lt]{\lineheight{1.25}\smash{\begin{tabular}[t]{l}$u_i$\end{tabular}}}}%
    \put(0.71300971,0.34705992){\makebox(0,0)[lt]{\lineheight{1.25}\smash{\begin{tabular}[t]{l}$v_j$\end{tabular}}}}%
    \put(0.24581462,0.17569708){\color[rgb]{0.50196078,0.50196078,0.50196078}\makebox(0,0)[lt]{\lineheight{1.25}\smash{\begin{tabular}[t]{l}$q-2$\end{tabular}}}}%
  \end{picture}%
\endgroup%

  		\subcaption{The AND-gadget, shown here for $q=5$.
  			In general, the upper $u_i,v_j$-path always has $3$ edges; both \emph{external vertices} $u_i$ and~$v_j$ have $q-2$ neighboring leaves.
  			If exactly~$0$ or~$1$ external vertices are
  			removed, this graph has $0$ edges modulo~$q$; if both vertices are removed, 
  			the graph has $1$ edge.}
  	\end{subfigure}%
  	\hspace{1cm}
  	\begin{subfigure}[t]{0.4\textwidth}
  		\centering
\begingroup%
  \makeatletter%
  \providecommand\color[2][]{%
    \errmessage{(Inkscape) Color is used for the text in Inkscape, but the package 'color.sty' is not loaded}%
    \renewcommand\color[2][]{}%
  }%
  \providecommand\transparent[1]{%
    \errmessage{(Inkscape) Transparency is used (non-zero) for the text in Inkscape, but the package 'transparent.sty' is not loaded}%
    \renewcommand\transparent[1]{}%
  }%
  \providecommand\rotatebox[2]{#2}%
  \newcommand*\fsize{\dimexpr\f@size pt\relax}%
  \newcommand*\lineheight[1]{\fontsize{\fsize}{#1\fsize}\selectfont}%
  \ifx\svgwidth\undefined%
    \setlength{\unitlength}{150bp}%
    \ifx\svgscale\undefined%
      \relax%
    \else%
      \setlength{\unitlength}{\unitlength * \real{\svgscale}}%
    \fi%
  \else%
    \setlength{\unitlength}{\svgwidth}%
  \fi%
  \global\let\svgwidth\undefined%
  \global\let\svgscale\undefined%
  \makeatother%
  \begin{picture}(1,0.41217392)%
    \lineheight{1}%
    \setlength\tabcolsep{0pt}%
    \put(0,0){\includegraphics[width=\unitlength,page=1]{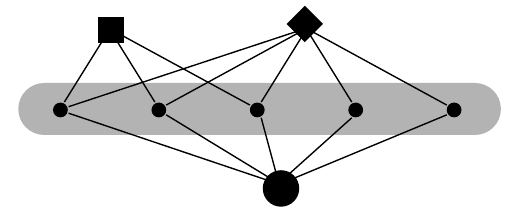}}%
    \put(0.10385102,0.35005812){\makebox(0,0)[lt]{\lineheight{1.25}\smash{\begin{tabular}[t]{l}$u_1$\end{tabular}}}}%
    \put(0.62917816,0.36005812){\makebox(0,0)[lt]{\lineheight{1.25}\smash{\begin{tabular}[t]{l}$u_2$\end{tabular}}}}%
    \put(0.58905548,0.01324864){\makebox(0,0)[lt]{\lineheight{1.25}\smash{\begin{tabular}[t]{l}$u_3$\end{tabular}}}}%
    \put(0.9036915,0.280216){\color[rgb]{0.50196078,0.50196078,0.50196078}\makebox(0,0)[lt]{\lineheight{1.25}\smash{\begin{tabular}[t]{l}$q$\end{tabular}}}}%
  \end{picture}%
\endgroup%

  		\subcaption{The consistency gadget contains $q$ gadget vertices, shown here for $q=5$. The number of matchings of size~$0$ and $3$ equals~$1$ modulo~$q$, and if $u_1$ is deleted, the number of non-empty edge-colorful matchings equals~$0$ modulo $q$.}
  	\end{subfigure}
  	\caption{\label{fig:matchings mod p}%
    The two gadgets used in the proof of \cref{lem:reduction-to-colmatch}.}
  \end{figure}
 
  Every $H$-copy $F$ in $G$ induces a set $\mathcal{M}_F$ of colorful matchings in $G'$. We describe this set in the following, show that $|\mathcal{M}_F| \equiv_q 1$, and that $\mathcal{M}_F$ and  $\mathcal{M}_{F'}$ are disjoint for $F \neq F'$.
  \begin{itemize}
    \item For each $v \in V(F)$, match all of $\set{v_1,v_2,v_3}$ within $Q_v$. 
    For fixed $v$, the number of possible matchings in $Q_v$ is $m_{\set{v_1,v_2,v_3}} \equiv_q 1$ by~\eqref{eq: consistency-gadget-extensions}.
    Let $\mathcal{Q}_F$ denote the set of all matchings that can be obtained by the previous step.
    Since matchings can be chosen independently for distinct $Q_v$, we obtain $|\mathcal{Q}_F| \equiv_q 1^{|V(F)|} \equiv_q 1$.
  \item Any $M \in \mathcal{Q}_F$ can be extended to several colorful matchings by choosing one edge from each color $(\mathtt{AND},ab)$ for $\{a,b\} \in E(H)$.
    For each $\{u,v\} \in E(F)$, the AND-gadget $A_{uv}$ has exactly one such edge,
    while the other AND-gadgets of color $(\mathtt{AND},ab)$ have $0$ such edges modulo $q$.
    Hence, the number edges of color $(\mathtt{AND},ab)$ that can extend $M$ is $1$ modulo $q$.
    This implies that the overall number $r_M$ of matchings extending $M$ into a colorful matching is also $r_M \equiv_q 1^{|E(F)|} \equiv_q 1$.
  \end{itemize}
  Overall, every $H$-copy $F$ induces $\sum_{M \in \mathcal{Q}_F} r_M \,\equiv_q\, \sum_{M \in \mathcal{M}_F} 1 \,\equiv_q\, 1$ colorful matchings, so we indeed have $|\mathcal{M}_F| \equiv_q 1$. We also observe from the construction that $\mathcal{M}_F \cap \mathcal{M}_{F'} = \emptyset$ for distinct $H$-copies $F$ and $F'$. In the appendix, we use properties of the gadgets to prove that colorful matchings $M \notin \bigcup_{F} \mathcal{M}_F$ cancel modulo $q$.
\begin{claim}\label{claim: irrelevant-matchings}
  The number of colorful matchings $M$ that are not contained in $\mathcal{M}_F$ for any $H$-copy $F$ is divisible by $q$.
\end{claim}
Overall, we have shown that the number of $H$-copies in $G$ and the number of colorful matchings in $G'$ agree modulo $q$. As $G'$ can be computed in polynomial time and the parameter is increased only by a constant factor, the claimed hardness results follow.
\end{proof}

To prove \cref{thm: hardness-matchings}, it suffices to give an fpt-reduction from ${\#\pp{ColMatch}\bmod q}$ to counting $k$-matchings modulo $q$. 
This is achieved by a standard inclusion-exclusion argument that can be found in the appendix.

\section{Counting matching-splittable subgraphs modulo $2^t$}\label{sec: algorithm}

In this section, we prove \cref{thm: algorithm-split} by describing an $n^{O(t4^s)}$-time algorithm for counting modulo $2^t$ the subgraphs of matching-split number $s$.
Our algorithm builds upon known algorithms for the decision and counting versions of subgraph problems;
we first review their underlying ideas and sketch our algorithm for \cref{thm: algorithm-split}.

\subsubsection*{Counting subgraphs of bounded vertex-cover number.}
The basic structure of our algorithm is similar to a known $O(n^{s+1})$ time algorithm~\cite{DBLP:journals/siamcomp/WilliamsW13,DBLP:journals/siamdm/KowalukLL13,DBLP:conf/focs/CurticapeanM14} for counting embeddings from~$H$ to~$G$ if $H$ has a vertex-cover $S\subseteq V(H)$ of size $s\in\N$. Counting embeddings is sufficient for counting subgraph copies, as we can first compute the number $\#\Aut(H)$ of automorphisms on~$H$ as $\#\Emb(H,H)$, and then use
\begin{equation}\label{eq: sub-emb-aut}%
\#\Sub(H,G) = \frac{\#\Emb(H,G)}{\#\Aut(H)}\,.
\end{equation}

Given $(H,G)$ with $h=\abs{V(H)}$ and $n=\abs{V(G)}$, the algorithm for computing $\#\Emb(H,G)$ first finds a minimum vertex-cover~$S$ of~$H$ in time~$h^{O(s)}$; then $I:=V(H)\setminus S$ is an independent set.
Then the algorithm enumerates all partial embeddings~$f$ from $H[S]$ to $G$, which takes time at most~$n^{O(s)}$.
Finally, for each~$f$, it remains to count all functions $g\colon I \to V(G)$ that extend~$f$ to a full embedding from $H$ to~$G$.
We observe that $g$ extends $f$ to a full embedding if and only if every vertex $u\in I$ maps via~$g$ to a vertex~$v=g(u)\in V(G)\setminus f(S)$ that satisfies the \emph{neighborhood constraint} $N_G(v)\cap f(S)\supseteq f(N_H(u))$.
Counting functions~$g$ with this property can be achieved (in a not completely obvious way) with dynamic programming;
we only need to know the number of vertices $v\in V(G)\setminus f(S)$ that have a specific neighborhood $N_G(v)\cap f(S)$, and for each~$f$, there are at most $2^s$ different possible such neighborhoods.
Overall, in~$n^{O(s)}$ time, we can compute the number $\#\Emb(H,G)$.

\subsubsection*{Detecting subgraphs of bounded matching-split number.}
Jansen and Marx~\cite{JansenMarx} extend the above approach and obtain an $n^{O(s)}$ time algorithm for the \emph{decision} problem $\Sub(H,G)$ 
when~$H$ has matching-split number~$s$.
In this case, we consider a \emph{splitting set} $S$ of size~$s$ instead of a vertex-cover, that is, the graph $M=H-S$ may have isolated edges besides isolated vertices.
Now the idea is to not only classify the vertices $v\in V(G)\setminus f(S)$ by their neighborhoods $N_v=N_G(v)\cap f(S)$, but to also classify the edges $\set{u,v}\in E(G-f(S))$ by their neighborhoods $\set{N_u,N_v}$.
It then remains to find a matching in~$G-f(S)$ that has as many isolated vertices and isolated edges as $H-S$, such that these vertices and edges satisfy the neighborhood constraints in~$f(S)$.
Jansen and Marx achieve this by reduction to a colored matching problem.

\subsubsection*{Our algorithm.}
In our algorithm for \cref{thm: algorithm-split}, we need to overcome two challenges:
\begin{enumerate}[(a)]
  \item\label{challenge-modulo} Since counting embeddings is algorithmically more straight-forward than counting subgraphs, we would like to count embeddings and divide by the number of automorphisms $\#\Aut(H)$ as in \cref{eq: sub-emb-aut}. 
  However, since we are counting modulo~$2^t$, the number $\#\Aut(H)\bmod 2^t$ may be~$0$, and so the division in \cref{eq: sub-emb-aut} is impossible. (In fact, even even numbers $\#\Aut(H)$ have no inverse modulo $2^t$.)
  \item\label{challenge-count} When mimicking Jansen and Marx's detection algorithm, we cannot just \emph{count} the relevant matchings in $G-f(S)$, since counting perfect matchings is~$\sharpP$-hard.
\end{enumerate}
Most of our effort focuses on overcoming~\eqref{challenge-modulo}:
In \cref{sub: rigidize-split}, we show that every graph~$H$ of matching-split number $s$ has a splitting set $R$ of size $O(s^2)$ that remains rigid under automorphisms, i.e., any automorphism $f$ of $H$ must 
satisfy $f(R) = R$.
In \cref{sub: fixed split}, we show how to compute $\#\Sub(H,G)$ if 
such a rigid splitting set~$R$ for~$H$ is given.
Rather than counting $H$-embeddings and attempting a division by $\#\Aut(H)$, we use the rigidity of $R$ to keep track of the automorphisms of~$H$ in a more explicit way.

To overcome~\eqref{challenge-count}, we use a determinant-based algorithm~\cite{HiraiNamba} to compute the Hafnian over a polynomial ring modulo~$2^t$ in \cref{sec:Hafnian}.
We then reduce our constrained matching counting problem to computing such Hafnians in \cref{sec:color-demands}.

\subsection{Rigidizing the splitting set}%
\label{sub: rigidize-split}

Let $H$ be a graph with a splitting set $S$ of size $s$, and let 
$M=H-S$ be the remaining graph of maximum degree $1$; we speak of $M$ as a matching, even though it may contain isolated vertices.
An automorphism~$f$ of $H$ may map a vertex $v\in S$ in the splitting set to $f(v) \notin S$.
We show that if a splitting set of size $s$ exists then there is also a \emph{rigid} splitting set~$R$ of size~$O(s^2)$, i.e., 
such that every $f \in \Aut(H)$ satisfies $f(R) = R$.
In fact, the following algorithm can find such a set~$R$.

\begin{algorithm}{Rigidize}{$H$}{Given a graph $H$ of matching-split number $s$, this algorithm computes a rigid splitting set $R\subseteq V(H)$ of size $O(s^2)$.}
\item[R1] (Find small splitting set.)
  Using brute-force, compute a set~$S\subseteq V(H)$ of size~$s$ such that $H-S$ is a matching.
\item[R2] (Extend it to neighbors of low-degree vertices.)
  Let $D\subseteq V(H)$ be the set of all vertices whose degree in~$H$ is at most~$s+1$.
  Set $T:=S$.
  While there is an edge $\{u, v\}$ with $u\in T\cap D$ and $v\in\overline T$, add $v$ to~$T$.
\item[R3] (Refine it.)
  Set $R:=T$.
  For each component~$C$ of~$H[T\cap D]$ with at most two vertices, 
  remove~$V(C)$ from~$R$.
\end{algorithm}
The following lemma captures useful properties of Rigidize. See the appendix for a proof.

\begin{lemma}\label{lem:rigidize}
  The algorithm Rigidize runs in time $h^{O(s)}$ where $h=\abs{V(H)}$, and the output set $R\subseteq V(H)$ has the properties that $\abs{R}\le O(s^2)$, that $H-R$ is a matching, and that every $f \in \Aut(H)$ satisfies $f(R)=R$.
\end{lemma}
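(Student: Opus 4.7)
The plan is to give an intrinsic, automorphism-invariant description of $R$ and then read off both the size and the runtime bounds from it. Concretely, I claim that
\[
R \;=\; (V(H)\setminus D) \;\cup\; \bigcup_{\substack{C \text{ component of } H[D]\\ |C|\ge 3}} C\,.
\]
Since every $f\in\Aut(H)$ preserves vertex degrees (and hence the set $D$) and sends connected components of $H[D]$ to components of the same size, this description immediately yields $f(R)=R$ for every $f\in\Aut(H)$.

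To justify the displayed formula, I would prove four structural facts in order. First, because $H-S$ is a matching, every $v\in V(H)\setminus S$ has at most $s$ neighbors in $S$ and at most one neighbor outside $S$, so $\deg_H(v)\le s+1$ and hence $V(H)\setminus D\subseteq S$; in particular, these high-degree vertices sit in $T\setminus D$ and are untouched by step~R3. Second, once the while loop in step~R2 terminates, every $u\in T\cap D$ satisfies $N_H(u)\subseteq T$; a short induction on BFS-distance then shows that the components of $H[T\cap D]$ are exactly the components of $H[D]$ that contain some vertex of $S\cap D$. Third, any component $C$ of $H[D]$ with $C\cap S=\emptyset$ lies entirely inside $H-S$, so it is a connected subgraph of a matching and therefore has at most two vertices; contrapositively, every component of $H[D]$ of size at least $3$ already intersects $S$. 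Putting these together, step~R3 removes precisely the components of $H[D]$ of size at most $2$ that happen to intersect $S$, and the displayed formula follows.

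For the size bound, $|V(H)\setminus D|\le|S|=s$. For the $D$-part, fix a component $C$ of $H[D]$ with $|C|\ge 3$ and perform a BFS inside $C$ starting from $C\cap S$. Vertices of $C\cap S$ lie in $D$ and hence have degree at most $s+1$ in $H$, so the first layer has at most $(s+1)|C\cap S|$ vertices; each first-layer vertex lies in $V(H)\setminus S$ and has at most one neighbor outside $S$ (its matching partner in $H-S$), so the second layer has at most $(s+1)|C\cap S|$ vertices as well; further layers cannot add anything, because second-layer vertices have their matching partner and all their relevant $S$-neighbors already visited. Hence $|C|\le(2s+3)|C\cap S|$, and summing over components using $\sum_C|C\cap S|\le|S\cap D|\le s$ gives $|R|=O(s^2)$. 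The runtime is immediate: R1 brute-forces the $\binom{h}{s}$ candidate splitting sets in time $h^{O(s)}$, and R2 and R3 are polynomial in~$h$.

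The main conceptual hurdle is recognizing the automorphism-invariant object hidden inside $R$. A priori, $T$ is produced by a greedy process anchored at an arbitrary minimum splitting set $S$, so it is not obvious that $T\cap D$ should coincide with a union of components of the global graph $H[D]$. The unlocking observation is the while-loop invariant at termination of R2: every low-degree vertex in $T$ has all of its $H$-neighbors inside $T$, so following $D$-edges out of $S\cap D$ cannot escape $T$. Once this invariant is in place, both rigidity and the quadratic size bound fall out cleanly.
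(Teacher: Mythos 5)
Your proposal is correct in substance but takes a genuinely different route from the paper. The paper proves rigidity by a direct case analysis on $v\in R\setminus D$ versus $v\in R\cap D$, using in each case the local guarantees of steps R2 and R3 to show $f(R)\subseteq R$; you instead derive the closed-form identity $R=(V(H)\setminus D)\cup\bigcup\setc{C}{C\text{ a component of }H[D],\ |C|\ge 3}$ and read off $f(R)=R$ from the fact that automorphisms preserve degrees and permute the components of $H[D]$ size-preservingly. Your identity checks out: $V(H)\setminus D\subseteq S$ since any $v\notin S$ has at most $s$ neighbours in $S$ and at most one outside; the termination condition of R2 makes $T\cap D$ closed under $N_H$, hence a union of components of $H[D]$, each of which must meet $S\cap D$ (the first vertex of such a component to enter $T$ is either in $S$ or was added as a neighbour of an earlier vertex of the same component); and any component of $H[D]$ disjoint from $S$ is a connected subgraph of the matching $H-S$ and so has at most two vertices. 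Your BFS layering recovers the paper's quadratic bound (the paper bounds $\abs{T}\le\abs{S}+2(s+1)\abs{S\cap D}$ by essentially the same two-round argument). What your characterization buys is a single automorphism-invariant description from which rigidity is immediate, rather than three separate local arguments; what it costs is the extra work of establishing the while-loop invariant and the component correspondence, which the paper avoids by arguing only about images of individual vertices and components. The one thing you omit is the third conclusion of the lemma, that $H-R$ is a matching. This is not a conceptual gap: from your identity, $V(H)\setminus R$ is the union of the components of $H[D]$ with at most two vertices, and since any edge of $H$ joining two vertices of $D$ is an edge of $H[D]$, the graph $H-R$ is a disjoint union of graphs on at most two vertices, hence a matching. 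But it is a stated claim of the lemma and should be written down.
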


\subsection{Counting subgraphs with rigid splitting sets}\label{sub: fixed split}

We use the rigid splitting set~$R$ from \cref{lem:rigidize} to 
compute the number of times~$H$ occurs as a subgraph modulo a power of two.
As a subroutine, we use an algorithm for counting colored matchings modulo a power of two in a setting involving particular ``color demands''.
\begin{definition}
Let~$G$ be a graph, let~$C$ be a finite set of colors, and let $c\colon V(G)\cup E(G)\to 2^C$ be a function that labels each vertex and edge with a subset of~$C$.
For any matching~$M$, let $I(M)$ be the set of its isolated vertices.
  For a coloring $c_M:I(M)\cup E(M)\to C$, the colored matching $(M,c_M)$ is \emph{permissible} if $c_M(t)\in c(t)$ holds for all~$t\in I(M)\cup E(M)$.

\emph{Color demands} are functions~$D_I,D_E\colon C \to \N$.
The pair $(M,c_M)$ \emph{satisfies the demands~$D_I,D_E$} if, for each~$i\in C$, the graph~$M$ contains exactly $D_I(i)$ isolated vertices~$v$ with $c_M(v)=i$ and exactly $D_E(i)$ edges with $c_M(e)=i$.
Let $\mathcal M(G,c,D_I,D_E)$ be the set of all permissible matchings~$(M,c_M)$ that satisfy the demand~$D$.
\end{definition}

As shown in the appendix, we obtain the following algorithm as a corollary to Hirai and Namba's algorithm~\cite{HiraiNamba} for computing the Hafnian over polynomial rings modulo~$2^t$.
\begin{lemma}\label{lem:color-demands}
  Given a graph~$G$, permissible colors~$c\colon V(G)\cup E(G)\to 2^C$, color demands $D_I,D_E\colon C\to\N$, and $t\in\N_{\ge1}$, there is an algorithm that computes the number $\abs{\mathcal M(G,c,D_I,D_E)}\bmod 2^t$ in time $n^{O(t\abs{C})}$.
\end{lemma}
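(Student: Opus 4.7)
The plan is to reduce to computing a Hafnian over a truncated polynomial ring modulo $2^t$ and then invoke Hirai and Namba's algorithm~\cite{HiraiNamba} as a black box. I introduce formal commuting indeterminates $x_i$ (one per edge color $i \in C$) and $y_i$ (one per vertex color $i \in C$). Since any permissible matching contributing to $\mathcal M(G,c,D_I,D_E)$ uses at most $n=|V(G)|$ isolated vertices and at most $n/2$ edges of any single color, I may assume $D_I(i), D_E(i) \le n$, and I work in the finite ring
$$R \;:=\; (\Z/2^t\Z)[x_i, y_i : i \in C] \,\big/\, \bigl(x_i^{D_E(i)+1},\, y_i^{D_I(i)+1} : i \in C\bigr).$$
An element of $R$ is stored as a coefficient vector indexed by the $\prod_{i \in C}(D_E(i)+1)(D_I(i)+1) \le (n+1)^{2|C|}$ surviving monomials, each coefficient living in $\Z/2^t\Z$.

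Next I encode $\mathcal M(G,c,D_I,D_E)$ as a single monomial coefficient of a Hafnian. Form the auxiliary graph $G'$ by adjoining, for each $v \in V(G)$, a fresh partner vertex $v^\ast$ together with a new edge $\{v,v^\ast\}$, and set $N = 2|V(G)|$. Define the symmetric $N \times N$ matrix $A$ over $R$ by
$$A_{u,v} \;=\; \sum_{i \in c(\{u,v\})} x_i \text{ for each } \{u,v\} \in E(G), \qquad A_{v,v^\ast} \;=\; \sum_{i \in c(v)} y_i,$$
with all remaining off-diagonal entries set to $0$. A perfect matching of the complete graph on $V(G')$ with nonzero contribution to $\haf(A)$ corresponds bijectively to a permissible colored matching $(M, c_M)$ of $G$: each isolated vertex $v \in I(M)$ is matched to its partner $v^\ast$ and contributes a factor $y_{c_M(v)}$, and each edge $e$ of $M$ contributes a factor $x_{c_M(e)}$. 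Expanding the Hafnian in $R$ therefore gives
$$\haf(A) \;=\; \sum_{(M, c_M) \text{ permissible}}\; \prod_{v \in I(M)} y_{c_M(v)} \cdot \prod_{e \in E(M)} x_{c_M(e)},$$
so the coefficient of the monomial $\prod_i x_i^{D_E(i)} y_i^{D_I(i)}$ in $\haf(A)$ equals $|\mathcal M(G,c,D_I,D_E)| \bmod 2^t$ by definition.

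It then remains to compute $\haf(A) \in R$ and read off this coefficient. Hirai and Namba's algorithm evaluates the Hafnian of any symmetric matrix over a polynomial ring modulo $2^t$ within a cost budget that, for $A$ of order $N = O(n)$ and coefficient representations of size $n^{O(|C|)}$, amounts to a total running time of $n^{O(t|C|)}$, matching the claimed bound.

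The main obstacle is that the standard Pfaffian and cofactor identities underlying fast Hafnian computation in characteristic zero or odd characteristic all rely on division by $2$, and hence break down modulo $2^t$. Hirai and Namba's result provides precisely the modular workaround we need, so the heart of the lemma is supplied by invoking it as a black box. The only combinatorial work on our side is the partner-edge gadget that re-expresses a matching with isolated vertices as a perfect matching of $G'$, and the truncation of the polynomial ring to the multidegrees determined by $D_I$ and $D_E$; both are essentially forced by the shape of $\mathcal M(G,c,D_I,D_E)$.
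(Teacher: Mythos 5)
Your overall architecture is the same as the paper's: encode permissible colored matchings as monomials of a Hafnian with one indeterminate per color, read off the coefficient of $\prod_i x_i^{D_E(i)}y_i^{D_I(i)}$, and delegate the modular Hafnian computation to Hirai--Namba. However, the partner-vertex gadget is broken, and this sinks the reduction. In your matrix $A$, the only nonzero off-diagonal entry in the row of $v^\ast$ is $A_{v,v^\ast}$, so every pairing of $V(G')$ that contributes a nonzero term to $\haf(A)$ must pair \emph{every} $v^\ast$ with $v$. Hence no edge of $G$ can ever be used, $\haf(A)$ collapses to $\prod_{v\in V(G)}\sum_{i\in c(v)} y_i$, and the extracted coefficient is $0$ whenever some $D_E(i)>0$. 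The claimed bijection fails exactly because the partner $v^\ast$ of a vertex $v$ covered by an edge of $M$ is left with no admissible mate in the perfect matching. Repairing this needs an absorber with a \emph{unique-extension} property; the paper uses the ODD$_n$/EVEN$_n$ gadget (a chain of $n-2$ triangles with one external leg per vertex of $G$), in which every leg subset of the correct parity extends in exactly one way to a perfect matching of the gadget interior, so unmatched vertices are absorbed without creating multiplicities or uncoverable vertices. (The naive alternative of making the partner vertices a clique would instead attach a uniform multiplicity of $(2m-1)!!$ with $m=\sum_i D_E(i)$, which is odd and could in principle be divided out mod $2^t$, but you would have to argue this; as written your construction does neither.)

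A second, smaller gap: you invoke Hirai--Namba directly on a matrix over a multivariate truncated polynomial ring, but the cited theorem is for \emph{univariate} polynomials modulo $2^t$. The paper bridges this (its Theorem on Hafnians of linear forms) with a Kronecker substitution sending $X_1^{e_1}\cdots X_r^{e_r}$ to $X^{e_1+De_2+\cdots+D^{r-1}e_r}$ for $D>n/2$, which turns the $r$-variate Hafnian into a univariate one of degree $n^{O(r)}$ and yields the $n^{O(t\abs{C})}$ bound. Your proposal asserts the multivariate running time without performing or justifying this reduction.
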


Before we state the main algorithm, we introduce some basic group-theoretic notation.
Let~$R$ be a splitting set of~$H$ that satisfies $f(R)=R$ for all $f\in\Aut(H)$.
Let $G$ be a graph and let $S\subseteq V(G)$ be a set with $\abs{S}=\abs{R}$.
For an embedding $\sigma\in\Emb(H[R],G[S])$ and an 
automorphism~$\varphi\in\Aut(H)$, we note that the function 
$\sigma\circ(\varphi\restrict[R])$ is again an embedding in $\Emb(H[R],G[S])$.
Indeed, we view this operation as a right-action of the group $\Aut(H)$ on the 
set $\Emb(H[R],G[S])$.
We call two embeddings $\sigma,\sigma'\in\Emb(H[R],G[S])$ \emph{equivalent} if 
there exists~$\varphi\in\Aut(H)$ such that
$\sigma'=\sigma\circ(\varphi\restrict[R])$; this clearly defines an equivalence 
relation.
The equivalence class $\sigma\Aut(H)$ is called the \emph{orbit} of $\sigma$ 
under $\Aut(H)$.
All orbits have the same size.
Let $E_S$ be a set of representatives for each orbit, that is, a maximal set of 
mutually non-equivalent embeddings in~$\Emb(H[R],G[S])$.

We are ready to state the modular counting algorithm for $s$-matching-splittable subgraphs.

\begin{algorithm}{ModCount}{$H,G,t$}{Given an $s$-matching-splittable graph~$H$, a 
    host graph~$G$, and an integer~$t\ge 2$, this algorithm computes the number 
    $\#\Sub(H,G)\bmod 2^t$.}
\item[C1] (Compute rigid splitting set.)
  Call Rigidize$(H)$ to compute the set~$R$.
\item[C2] (Reduce to counting colored matchings.)
  For each $S\subseteq V(G)$ with $\abs{S}=\abs{R}$ (that is, a possible image of $R$)
  and each representative embedding $\sigma\in E_S$ from $H[R]$ to $G[S]$,
  we construct an instance~$(G-S,c_\sigma,D_I,D_E)$ of colored matching with demands and
  use \cref{lem:color-demands} to obtain the number $\abs{\mathcal M(G-S,c_\sigma,D_I,D_E)}\bmod 2^t$:
  \begin{description}
    \item[a.]
      (Set permitted colors.)
      Let~$C=2^R\cup \binom{2^R}{1}\cup\binom{2^R}{2}$.
      For each vertex ${v\in V(G)\setminus S}$, 
      let $N_v\subseteq R$ be the vertices of $R$ that hit $N_G(v)\cap S$ under $\sigma$, that is,
      $N_v=\sigma^{-1}(N_G(v)\cap S)$.
      Define~$c_\sigma(v)=\setc{N}{N\subseteq N_v}$.
      Moreover, for each $\set{u,v}\in E(G-S)$, define $c_\sigma(\set{u,v})=\setc{\set{N,N'}}{N\subseteq N_u,\; N'\subseteq N_v}$.
    \item[b.]
      (Make demands.)
      The demands $D_I,D_E:C\to\N$ depend only on~$H$ and~$R$.
      For each $N\subseteq R$, we let 
      $D_I(N)$ be the number of isolated vertices~$v$ in $H-R$ whose neighborhood in~$H$ 
      satisfies $N_H(v)\cap R=N$.
      Moreover, for all $N,N'\subseteq R$, we let 
      $D(\set{N,N'})$ be the number of edges~$\set{u,v}\in E(H-R)$ with $\set{N_H(u)\cap R, N_H(v)\cap R}=\set{N,N'}$.
  \end{description}
\item[C3] (Sum up.)
  Output the sum modulo $2^t$ of all integers returned by the queries in C2.
\end{algorithm}

  In the appendix, we prove that ModCount satisfies the properties stated in \cref{thm: algorithm-split}.

\section{Hardness of counting paths modulo two}\label{sec: hardness-paths}

In this section, we prove \cref{thm: hardness-paths}, that counting $k$-paths modulo $2$ is $\pW$-hard.
We first formally introduce this and some intermediate problems.

The \emph{length} of a path is the number of its edges.
For a graph~$G$ and vertices~$s,t\in V(G)$, an \emph{$s,t$-path} is a simple path from~$s$ to~$t$.
For a computable, strictly increasing function~$f\colon\N\to\N$, we define \pp{$f$-Flexible Path} to be the problem that is given~$(G,s,t,k)$ to decide whether there exists any~$s,t$-path in~$G$ whose length~$\ell$ satisfies $k\le\ell\le f(k)$.
When $\id$ denotes the identity function, then \pp{Path} (also known as \pp{$k$-Path} or \textsc{Longest Path}) is defined as \pp{$\id$-Flexible Path}.
We similarly define \pp{Directed $f$-Flexible Path} and \pp{Directed Path} for directed graphs, and we define the counting and parity versions of these problems in the canonical manner.

We start our reduction at the vertex-colorful subgraph problem $\oplus\vcSub(\mathcal H)$ for a class $\mathcal H$ of unbounded tree-width, which is $\pW$-hard by \cref{cor: hardness-tw}.
The core part of the reduction is in \cref{lem:vcsubs-to-dipaths}, where we reduce to counting paths of somewhat flexible length in a directed graph (modulo~$2$).
From there, we reduce to the familiar $k$-path problem in undirected graphs using standard tricks.

\begin{lemma}\label{lem:vcsubs-to-dipaths}%
  For any class $\mathcal H$ of connected $4$-regular graphs without non-trivial automorphisms,
  there is a computable strictly increasing function~$f$
  such that $\oplus\vcSub(\mathcal H)$ admits a parsimonious polynomial-time fpt-reduction to \pp{$\oplus$Directed $f$-Flexible Path}.
\end{lemma}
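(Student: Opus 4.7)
The plan is to exploit the Eulerian structure of any $H\in\mathcal{H}$. Since every such $H$ is connected and $4$-regular, it admits an Eulerian circuit~$W$ of length $m=2k$, where $k=|V(H)|$. Orienting each edge along its traversal in~$W$ turns~$H$ into a digraph with in- and out-degree~$2$ at every vertex, and at each vertex~$a$ the four half-edges split into two \emph{Eulerian visits}, each pairing one in-half-edge with one out-half-edge that are consecutively traversed in~$W$. Since $\Aut(H)$ is trivial, the quantity $\#\vcSub(H,G)$ equals the number of vertex-colorful embeddings $\varphi\colon V(H)\to V(G)$.

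The construction of the directed graph~$G'$ proceeds by gadget replacement. For each $v\in V(G)$ with color $a=c(v)$, insert a gadget $\Gamma_v$ with four port vertices (one per half-edge at~$a$) and two internal arcs, each connecting the in-port and out-port of one Eulerian visit at~$a$. For each $G$-edge $\{v,v'\}$ whose $H$-image is directed $(a,b)$ in~$W$, add an external arc from the out-port of $(a,b)$ at~$\Gamma_v$ to the in-port of $(a,b)$ at~$\Gamma_{v'}$. We open the Eulerian cycle at a fixed edge $e_1=(a_0,a_1)$ by introducing a source~$s$ with arcs to the $e_1$-in-ports at all gadgets of color~$a_1$, and a sink~$t$ that receives arcs from the $e_1$-out-ports at all gadgets of color~$a_0$. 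Every vertex-colorful embedding~$\varphi$ induces a unique \emph{good} $s,t$-path in~$G'$ of a fixed length $L=\Theta(k)$ by following the trail $W\setminus e_1$ through the gadgets $\Gamma_{\varphi(a_i)}$ in order. We set $k'=L$ and extend~$f$ to any strictly increasing computable function with $f(L)=L$ on the length values arising from~$\mathcal{H}$.

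The main obstacle lies in the \emph{bad} $s,t$-paths of length~$L$, which arise when some color $a\in V(H)$ is split across two distinct gadgets $\Gamma_v,\Gamma_{v'}$ with $v\neq v'$, each traversed along only one of its two internal arcs. We show these cancel modulo~$2$ by a fixed-point-free involution on bad paths: for the lexicographically smallest inconsistent color~$a$, the involution swaps the two vertex choices between the two Eulerian visits to~$a$. To guarantee that this swap always yields another valid simple $s,t$-path of length~$L$ in~$G'$, we need to engineer the gadget with consistency substructures that render the external-arc pattern equivariant under the $V_a$-swap; achieving this symmetry while preserving the bijective correspondence between good paths and embeddings is the subtle part of the construction. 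Combined with the triviality of $\Aut(H)$, which prevents any additional overcounting, this delivers the congruence
\[
  \#\{s,t\text{-paths in }G'\text{ of length in }[k',f(k')]\}\equiv \#\vcSub(H,G)\pmod{2},
\]
yielding the desired parsimonious reduction.
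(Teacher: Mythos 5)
Your high-level plan (Eulerian tour of the $4$-regular pattern, a layered directed construction traversed left to right, canonical paths in bijection with colorful copies, and a parity-cancelling involution on the remaining paths) matches the paper's strategy. But the step you flag as ``the subtle part'' is in fact the entire content of the lemma, and the involution you propose does not work as stated. You want to swap, at the lexicographically first inconsistent color $a$, the two vertex choices $v,v'$ between the two Eulerian visits to $a$. After the swap, the path must enter and leave the gadget of the \emph{other} vertex through the same external arcs it used before, i.e.\ $v'$ must be adjacent in $G$ to the predecessor and successor vertices chosen at the neighboring visits of the first occurrence of $a$ (and symmetrically for $v$ at the second occurrence). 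There is no reason this holds, so the swapped object is generally not a path; no local ``equivariance engineering'' of the vertex gadget can create adjacencies in $G$ that are not there. Your setup with a fixed target length $L$ and $f(L)=L$ also leaves no room for the repair.

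The paper resolves this differently: it never exchanges which host vertex is traversed. Instead, whenever a color class appears at two positions $i<j$ along the tour, it adds explicit bidirected matching edges between the two copies $(i,x)$ and $(j,x)$ of each vertex $x$ of that class. The involution then toggles, at the first inconsistent position, between using the direct matching edge $(i,x)\to(i+1,x)$ and a three-edge detour $(i,x)\to(j,x)\to(j+1,x)\to(i+1,x)$ through the other copy of the \emph{same} vertex $x$, so adjacency is automatic. This toggle changes the path length by exactly $2$, which is precisely why the target is \textsc{$f$-Flexible Path} with a window $[k',f(k')]$ rather than a single length; a separate pairing handles ``jumpy'' paths that use a gadget edge on only one side. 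To complete your argument you would need either this detour-based involution together with the length window, or some genuinely new consistency mechanism; as written, the cancellation claim is unsupported.
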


\begin{proof}
  Let $H\in\mathcal H$ and $G$ be given as input, 
  where $G$ is given with color classes $V_u$ for~$u\in V(H)$.
  Let $k=\abs{E(H)}$.
  Since $H$ is connected and $4$-regular, it has an Eulerian tour 
  $u_0,u_1,\dots,u_{k-1},u_{k}$ such that $E(H)=\setc{ \set{u_i,u_{i+1}} }{ i\in\set{0,\dots,k-1} }$.
  Every vertex of~$H$ appears exactly twice on the tour, except 
  for $u_0=u_{k}$, which appears three times.
  We construct a graph $G'$ that visits every color class of~$G$ two (or three) 
  times according to the Eulerian tour in~$H$.

  Before we give a formal construction, we give an overview.
  The graph~$G'$ will essentially be a sequence of bipartite 
  graphs~$B_0,\dots,B_\ell$ whose edges are all directed from left to right.
  For a bipartite graph $B$, we write $L(B)$ and $R(B)$ for 
  its left and right part, respectively.
  We will have $R(B_j)=L(B_{j+1})$ for all $j\in\set{0,\dots,\ell}$.
  Each $B_j$ will either be a perfect matching~$M_i$ or a graph $G_i$ that is a 
  copy of $G[V_{u_{i-1}}\cup V_{u_{i}}]$. (Note that $G_i$ is indeed bipartite, since $G$ is $H$-colored and $H$ contains no self-loops.)
  Pictorially, the sequence of bipartite graphs is $M_0 G_1 M_1 \dots M_{k-1} 
  G_k M_k M_{k+1}$.
  We also add some additional gadget edges between non-neighboring 
  layers.
  
  We now describe the construction of $G'$ in detail:
  \begin{enumerate}
    \item
      \textbf{Graph edges.}
      For each $i\in\set{1,\dots,k}$, let $G_i$ be a fresh copy of 
      $G[V_{u_{i-1}} \cup V_{u_{i}}]$, renamed so that $L(G_i)=\set{i}\times 
      V_{u_{i-1}}$ and $R(G_i)=\set{i}\times V_{u_i}$.

    \item
      \textbf{Matching edges.}
      For each $i\in\set{1,\dots,k-1}$, we connect $G_i$ with $G_{i+1}$ by 
      adding the canonical perfect matching~$M_i$ from $L(M_i)\doteq 
      R(G_i)=\set{i}\times V_{u_i}$ to $R(M_i)\doteq L(G_{i+1})=\set{i+1}\times 
      V_{u_i}$.

      At the last graph~$G_{k}$, we add a dangling matching $M_{k}$ with 
      $L(M_k)=R(G_k)=\set{k}\times V_{u_k}$ and $R(M_k)=\set{k+1}\times 
      V_{u_k}$.
      We add a further dangling matching $M_{k+1}$ with $L(M_{k+1})=R(M_k)$ and 
      $R(M_{k+1})=\set{k+1}\times V_{u_k}$.
      And a final dangling matching~$M_0$ to the left of $G_1$ so that 
      $R(M_0)=L(G_1)$ and $L(M_0)=\set{0}\times V_{u_0}$.

    \item
      \textbf{Gadget edges.}
      For all $i,j\in\set{1,\dots,k}$ with $i < j$ and $u_i = u_j$, note that 
      $L(M_i)=\set{i}\times V_{u_i}$ and $L(M_j)=\set{j}\times V_{u_i}$.
      We add the canonical \emph{bidirected} matching between $L(M_i)$ and 
      $L(M_j)$.
      Similarly, we add the canonical bidirected matching between $R(M_i)$ and 
      $R(M_j)$.
      We also add those matchings between $L(M_0)$ and $L(M_{k+1})$ and between 
      $R(M_0)$ and $R(M_{k+1})$.

    \item
      \textbf{Source/sink.}
      Let $s$ be a new vertex and add all edges $(s,v)$ for $v\in L(M_0)$.
      Let $t$ be a new vertex and add all edges $(v',t)$ for $v'\in R(M_{k+1})$.

    \item
      \textbf{Parameters.}
      Finally, we set $k' = 2k + 4$ and $f(k') = k + 3(k+2) + 2$ so that we are 
      counting all $s,t$-paths whose length is between $k'$ and $f(k')$.
  \end{enumerate}
  This finishes the construction of~$G'$.
  Note that $G'\setminus\set{s,t}$ is indeed a sequence $B_0\dots B_\ell$ of 
  bipartite graphs with some additional gadget edges, where $\ell=2k+2$.
  We define the \emph{$j$-th layer of $G'$} as the set $L_j=L(B_j)$ for 
  $j\le\ell$ and $L_{\ell+1}=R(B_\ell)$ for $j=\ell+1$.
  Recall that $L_j=R(B_{j-1})$ for $j>0$.

  \medskip
  We describe the \emph{canonical} solutions in the output of the reduction.

  Thus, let $H'$ be an $H$-subgraph of $G$ that is colorful.
  Recall that this means that $H$ is isomorphic to $H'$ and that the `coloring' homomorphism $c\colon V(G)\rightarrow V(H)$ is bijective on $H'$.
  Since $H'$ is finite, $c$ is in fact an isomorphism.
  (It must map non-edges to non-edges because $E(H)$ and $E(H')$ have the same size.)
  Let $\phi\colon H'\rightarrow H$ be the inverse of $c$ and note that $\phi$ is a homomorphism.

  We define the canonical $s,t$-path $p=p_{H'}$ in $G'$ corresponding to $H'$ by picking one 
  vertex from each layer.
  More precisely, $p$ is the unique $s,t$-path that does not use any gadget 
  edges and that satisfies the following property for all 
  $j\in\set{0,\dots,\ell+1}$: Setting $i$ and $u$ such that $L_j=\set{i}\times 
  V_u$, we have $V(p) = \set{(i,\phi(u))}$.
  Note that this determines a unique vertex set of~$p$.
  We claim that $V(p)$ indeed induces a path on the graph and matching edges, 
  which must be unique if it exists.

  To show that $p$ is a path, let $j\in\{0,\ldots, \ell\}$.
  We claim that the two vertices in $V(p)\cap V(B_j)$ are adjacent in $B_j$.
  If $B_j$ is one of the matching graphs, then $U_j=\set{i}\times V_u$ for some 
  $i$ and $u$, and $U_{j+1}=\set{i+1}\times V_u$.
  By choice of the matching, there is indeed an edge from $(i,\phi(u))$ to 
  $(i+1,\phi(u))$ in~$B_j$.
  Otherwise, $B_j$ is one of the graph copies, say $L_j=\set{i}\times 
  V_u$ and $L_{j+1}=\set{i}\times V_{v}$; note that by construction of $G'$, we 
  only do this if $\{u,v\}$ is an edge of $H$ since $v$ is the successor of $u$ in an 
  Eulerian tour of $H$.
  Since $\phi$ is a graph homomorphism respecting the coloring, we have that 
  $\phi(u) \phi(v)$ is an edge in $G[V_u\cup V_v]$.
  Since $B_j$ was a copy of this graph by construction, there is an edge from 
  $(i,\phi(u))$ to $(i,\phi(v))$.
  Overall, we get that $spt$ is an $s,t$-path in $G'$, and its length is 
  $\ell+2=2k+4=k'$; this is the canonical path corresponding to~$\phi$.

  In summary, every vertex-colorful $H$-subgraph $H'$ defines a unique canonical $s,t$-path.
  Conversely, let $p$ be a path whose internal vertex set is of the form 
  $V(p) = \set{(i_j,x_j)}$ where $x_j\in V_u$ for layer  $L_j=\set{i}\times V_u$, and that is consistent in the sense that it always picks the same vertex from each color class.
  (Formally, if $u_i=u_j$ then $x_i=x_j$.)
  Then $p$ describes a set of $|V(H)|$ many vertices and $k$ edges in $G$ that make up a colorful $H$-subgraph $H'$ of $G$.
  Since $H$ has no nontrivial automorphisms, this graph is uniquely defined.

  \medskip

  Let $\mathcal P$ be the set of all $s,t$-paths whose length~$\ell$ satisfies $k'\le\ell\le f(k')$.
  The central claim is that the number of paths in~$\mathcal P$ that are not canonical solutions is even.
  For this, we construct a fixed-point free involution~$\pi$ on non-canonical 
  solutions.
  
  First we concentrate on solutions that are \emph{jumpy}:
  A path $p\in\mathcal P$ is jumpy if
  there exist $i,j\in\set{1,\dots,k}$ with $i\ne j$ so that we added gadget 
  edges between $M_i$ and $M_j$, and
  \begin{enumerate}[(i)]
    \item
      $p$ uses the edge from $(i,x)\in L(M_i)$ to $(j,x)\in L(M_j)$ but not the 
      edge from $(j+1,x)\in R(M_j)$ to $(i+1,x)\in R(M_i)$,
      or
    \item
      $p$ uses the edge from $(i+1,x)\in R(M_i)$ to $(j+1,x)\in R(M_j)$ but not 
      the edge from $(j,x)\in L(M_j)$ and $(i,x)\in L(M_i)$.
  \end{enumerate}
  If $p$ is jumpy, we define $\pi(p)$ as follows: First we identify the 
  lexicographically first position where $p$ is jumpy.
  Then we exchange state (i) with state (ii) at that position.
  Note that (i) implies that $p$ uses the edge from $(j,x)$ to $(j+1,x)$ in 
  $M_j$ and (ii) implies that $p$ uses the edge from $(i,x)$ to $(i+1,x)$ in 
  $M_i$; we swap these edges too, when applying $\pi$.
  Now $\pi$ is a fixed-point free involution on jumpy paths, and note that 
  $\pi(p)$ has the same length as $p$.

  Next we consider paths that are \emph{bad}.
  A path $p\in \mathcal P$ is bad if it is not jumpy, and yet there exist 
  $i,j\in\set{1,\dots,k}$ with $i\ne j$ so that we added gadget edges between 
  $M_i$ and $M_j$, and
  \begin{enumerate}[(i)]
    \item
      $p$ uses the $M_i$-edge from $(i,x)$ to $(i+1,x)$ and not the $M_j$-edge 
      from $(j,x)$ to $(j+1,x)$, or
    \item
      $p$ uses the three edges from $(i,x)$ to $(j,x)$ to $(j+1,x)$ to 
      $(i+1,x)$.
  \end{enumerate}
  We define $\pi(p)$ for a bad path~$p$ by finding the first bad position, and 
  switching between (i) and (ii).
  Say, $p$ was in state (i) at that position, then $\pi(p)$ is in state (ii) at 
  that position, which makes the overall path exactly two edges longer.
  We claim that all non-jumpy paths pair up in this manner, without having to 
  consider arbitrarily long paths.
  Indeed, since $p$ is not jumpy, when we look at the vertices that~$p$ 
  traverses, we are essentially traversing the layers in a monotone order, except for potential short two-vertex detours in the case of gadgets in 
  a bad state~(ii).
  More precisely, when $p$ reaches a vertex in a layer~$L_j$, either the next 
  vertex is in $L_{j+1}$ or the third vertex after it is in $L_{j+1}$.
  This means that the paths moves to the right by one layer at least once ever 
  $4$ vertices, and thus the longest path that is not jumpy has length at most 
  $k+3(k+2)+2=k'$, accounting for $k$ graph edges, at most~$3$ matching/gadget edges 
  for each of the $k+2$ gadgets, and two edges at the source and sink.
\end{proof}

For completeness, we include two simple reductions:
First from the flexible-length to the fixed-length problem in directed graphs, then from the directed to undirected problem.
\begin{lemma}\label{lem:dipath-flexibility}
  Let $f:\N\to\N$ be a computable, strictly increasing function.
  There is a parsimonious polynomial-time fpt-reduction from \pp{\#Directed $f$-Flexible Paths} to \pp{\#Directed 
  Paths}.
\end{lemma}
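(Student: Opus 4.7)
The plan is to absorb the ``length slack'' $\ell \in [k, f(k)]$ into a gadget, so that counting $s,t$-paths of flexible length in $G$ reduces to counting $s,t'$-paths of a single fixed length $K$ in a modified graph $G'$. On input $(G,s,t,k)$, I would set $L := f(k)$ and $K := L+1$, then construct $G'$ by adding fresh ``tail'' vertices $x_1, \dots, x_{L-k}$ and a new sink $t'$, and introducing directed edges $x_i \to x_{i+1}$ for $0 \le i < L-k$ together with shortcut edges $x_i \to t'$ for every $0 \le i \le L-k$, where $x_0 := t$. The output instance of \pp{\#Directed Paths} would be $(G', s, t', K)$.

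The correctness argument boils down to a bijection between $s,t'$-paths in $G'$ of length exactly $K$ and $s,t$-paths in $G$ of length in $[k,L]$. Given such a path $P'$, I would argue that it decomposes uniquely as an $s,t$-path $P$ in $G$ followed by a tail $t = x_0 \to x_1 \to \cdots \to x_j \to t'$: once $P'$ leaves $V(G)$ it cannot return, since no gadget vertex has an edge back into $G$, and once inside the chain it must proceed by $x_i \to x_{i+1}$ steps until taking exactly one shortcut edge $x_{i'} \to t'$. Writing $\ell$ for the length of $P$ and $j+1$ for the length of the tail, the constraint $\ell + j + 1 = K = L+1$ with $j \in \{0, \dots, L-k\}$ forces $\ell \in [k, L]$, which is precisely the range we wish to count. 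The inverse map sends an $s,t$-path of length $\ell$ to the unique extension with tail of length $L-\ell+1$.

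The parameter $K = f(k)+1$ is computable from $k$, and $G'$ is constructed in polynomial time in $|G|$ and $L$, so this is a parsimonious polynomial-time fpt-reduction, as required.

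The main (minor) obstacle is verifying uniqueness of the decomposition, which rests on two observations: (i) no gadget vertex has an edge back into $V(G)$, so $P'$ cannot ping-pong between $G$ and the tail; and (ii) from any entry point $x_i$ into the chain there is exactly one simple path to $t'$ of each feasible length, determined by when the shortcut is taken. Beyond checking these, the entire argument is bookkeeping, and no hardness or algorithmic subtlety arises.
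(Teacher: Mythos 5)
Your construction is correct and is essentially the paper's own: both append a tail gadget admitting exactly one $t,t'$-path of each length in $\{1,\dots,f(k)-k+1\}$, the only cosmetic difference being that you place the shortcut edges at the end of the chain (each $x_i \to t'$) whereas the paper places them at the start (each $t \to v_j$); these gadgets are mirror images of one another. The bijection and parameter bookkeeping match the paper's argument.
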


\begin{lemma}\label{lem:dipath-to-path}
  There is a parsimonious polynomial-time fpt-reduction from \pp{\#Directed Paths} to \pp{\#Paths}.
\end{lemma}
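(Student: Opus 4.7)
I would reduce by constructing an undirected graph $G'$ from $(G,s,t,k)$ whose length-$k'$ simple $s',t'$-paths biject with the directed length-$k$ simple $s,t$-paths of~$G$. First I pad~$G$ by a fresh pendant source $s'$ with edge $(s',s)$ and a fresh pendant sink $t'$ with edge $(t,t')$, updating $k\leftarrow k+2$; this preserves the count of relevant directed paths while ensuring WLOG that $s$ has no incoming edges and $t$ has no outgoing edges. Next, for each vertex $v\in V(G)$ I introduce a vertex gadget $\gamma_v$ with two distinguished port vertices $v^-$ and $v^+$, and for each directed edge $(u,v)\in E(G)$ I add an undirected edge-gadget edge $\{u^+,v^-\}$. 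The new source is $s^-$, the new sink is $t^+$, and $k'$ is chosen so that canonical forward traversal of each gadget plus the intervening edge-gadget edges contributes the prescribed total length.

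The canonical correspondence sends a directed path $s=v_0\to v_1\to\cdots\to v_k=t$ to the undirected path that traverses $\gamma_{v_0},\ldots,\gamma_{v_k}$ in order from port $v_i^-$ to port $v_i^+$, interleaved with the edge-gadget edges $\{v_i^+,v_{i+1}^-\}$. This map is manifestly injective and length-preserving. The real work is to prove that it is also a surjection onto the set of undirected $s^-,t^+$-paths of length exactly $k'$ in $G'$, which is precisely what makes the reduction parsimonious.

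The main obstacle is ruling out ``zigzag'' undirected $s^-,t^+$-paths that traverse some edge-gadget edge $\{u^+,v^-\}$ in the direction opposite to the original $(u,v)\in E(G)$, or that enter and leave a vertex gadget through the same port. A naive single-edge vertex gadget $v^-v^+$ is \emph{not} strictly parsimonious: one can verify on a small example where $G$ contains a $2$-cycle that spurious length-$k'$ paths arise. To obtain a genuine bijection, I would make $\gamma_v$ asymmetric at its two ports, for instance by attaching short pendant ``tags'' of distinct lengths or distinct degree patterns to $v^-$ and~$v^+$. Sizing the gadget and tags so that any non-canonical traversal changes the total length away from~$k'$ then, together with the boundary conditions at $s$ and $t$ and the simplicity of the path, permits a straightforward induction along the undirected path: each gadget encountered must be entered at its $v^-$ port and left at its $v^+$ port, and the intervening hops are forced to follow original directed edges. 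This recovers a unique directed $s,t$-path of length~$k$ for each undirected $s^-,t^+$-path of length~$k'$. Since $\gamma_v$ has constant size, we have $k'=O(k)$ and the construction is polynomial time, so the reduction is indeed a parsimonious polynomial-time fpt-reduction.
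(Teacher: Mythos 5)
Your high-level architecture (a two-port vertex gadget, one undirected connector per arc, and a target length $k'$ that only canonical traversals achieve) is the same as the paper's, which follows Flum and Grohe: there each vertex $v$ becomes a single edge $v'v''$ and each arc $(u,v)$ becomes a path of length $k+1$ from $u''$ to $v'$, with target length $k(k+2)$. You also correctly identify the crux --- the naive construction is not parsimonious because an undirected path can traverse an arc gadget against its orientation or zigzag between ports --- and your pendant-source/sink padding is a sensible normalization. But the mechanism you propose to close the gap does not work. Attaching pendant ``tags'' (of whatever lengths or degree patterns) to $v^-$ and $v^+$ has \emph{no effect} on the set of simple $s^-,t^+$-paths or on their lengths: a simple path whose two endpoints are fixed elsewhere can never enter a pendant subtree at an internal vertex, since it would have to return through the attachment point. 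Vertex degrees likewise impose no constraint on which simple paths exist. So the surjectivity step, which you yourself flag as ``the real work,'' is left entirely unproved, and the ``straightforward induction along the undirected path'' has nothing to induct on.

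The device that actually forces canonicity is a length-accounting argument on the connectors, not decoration of the ports. In the paper's construction the arc gadgets are paths of length $k+1$ while the vertex gadgets are single edges; writing a candidate $s',t'$-path as $\ell_1$ vertex edges and $\ell_2$ arc segments, its length is $\ell_1+\ell_2(k+1)$, and since vertex edges are never adjacent one has $\ell_1\le\ell_2+1$, whence any non-alternating path ($\ell_1<\ell_2$) has length strictly below or strictly above $k(k+2)$. Once alternation is forced, direction consistency follows for free: reversing direction at a vertex requires two consecutive arc segments with no vertex edge in between. If you want to keep your single-edge arc connectors, you would instead have to make the vertex gadget a long internal $v^-,v^+$-path and carry out the analogous (and somewhat more delicate) counting; either way, the argument you would need to supply is this quantitative one, not the tag-based one in your write-up.
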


With all these prerequisites collected, we can complete the proof.

\begin{proof}[Proof of \cref{thm: hardness-paths}]
  Let $\mathcal H$ be the class of all graphs~$H$ that are connected, $4$-regular, and whose automorphism group has size one.
  We use the probabilistic method to argue that the tree-width of graphs in~$\mathcal H$ is not bounded.
  With probability $1-o(1)$ as $h\to\infty$, random $4$-regular graphs with~$h$ vertices are connected~\cite[Theorem~2.10]{wormald1999models},
  they have no nontrivial automorphisms~\cite{DBLP:journals/rsa/KimSV02},
  and they are almost Ramanujan~\cite[Theorem~7.10]{hoory2006expander}, that is, their second-largest eigenvalue in absolute value satisfies $\lambda\le 2\sqrt{3}+o(1)<3.5$.
  By a union bound, $H$ has all three properties simultaneously with probability $1-o(1)$.
  By Cheeger's inequality~\cite[Theorem 4.11]{hoory2006expander}, we have 
  \[ \min_{S\subseteq V(H), |S|\le\frac12 h } \frac{\abs{E(S,\overline S)}}{\abs{S}}\ge \frac{1}{2}(4-\lambda)>0.1\,,\] that is, the edge expansion is bounded away from zero, which implies that the tree-width of~$H$ is at least linear in~$h$ (see, e.g.,~\cite[Exercise~7.34]{DBLP:books/sp/CyganFKLMPPS15}).
  Thus, $\mathcal H$ contains graphs of arbitrarily large tree-width.

  By \cref{cor: hardness-tw}, $\oplus\vcSub(\mathcal H)$ is $\pW$-hard under parsimonious fpt-reductions.
  If there is a parsimonious fpt-reduction from problem $\#A$ to problem $\#B$ then in particular the parity version $\oplus A$ reduces to $\oplus B$.
  Writing $\oplus A\leq \oplus B$ we can summarize the chain of reductions in \cref{lem:vcsubs-to-dipaths,lem:dipath-flexibility,lem:dipath-to-path} as
  \[
    \oplus\vcSub(\mathcal H)
    \le
    \oplus\pp{Directed $f$-Flexible Paths}\\
   \le
    \oplus\pp{Directed Paths}
    \le
    \oplus\pp{Paths}\,.
  \]
  This proves the $\pW$-hardness of $\oplus\pp{Paths}$.
  The containment follows from the standard fpt-reduction from $\#\pp{Paths}$ to $\#\pp{Clique}$, which is parsimonious.
  Overall, the claim follows.
\end{proof}


\section{Conclusion}

We conducted an initial investigation of modular subgraph counting, leading to the partial classification depicted in \cref{fig: overview}. To obtain a complete picture, the following conjecture needs to be addressed.

\begin{conjecture}\label{conjecture}
  For any computable pattern class $\mathcal H$:
  \begin{itemize}
    \item If $\mathcal H$ has unbounded matching-split number, then the problem $\oplus\Sub(\mathcal H)$ is $\pW$-complete.
    \item If $\mathcal H$ has unbounded vertex-cover number, then $\ModSub{\mathcal H}{q}$ for fixed $q \in \N$ is $\Mod_p \W$-complete for any odd divisor $p$ of $q$.
  \end{itemize}
\end{conjecture}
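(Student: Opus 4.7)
\noindent
The plan is to prove each bullet by combining a structural extraction (finding canonical ``hard'' subpatterns inside $\mathcal H$) with a reduction from an already-established hardness result in the paper. The seed hardness results are \cref{thm: hardness-matchings} (matchings modulo odd $q$) for the second bullet and \cref{thm: hardness-paths} (paths modulo $2$) for the first bullet.

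\medskip\noindent
\textbf{The second bullet (unbounded vertex-cover number, odd $q$).}
The paper already notes that a Ramsey-style argument extends \cref{thm: hardness-matchings} to \emph{hereditary} classes of unbounded vertex-cover number. The plan is to remove hereditariness. By the Gallai bound $\mu(H)\le \tau(H)\le 2\mu(H)$, every $H\in\mathcal H$ with vertex-cover number at least $2k$ contains a $k$-matching as a (not necessarily induced) subgraph; moreover, by iterated Ramsey, one can ensure that the remaining vertices form one of finitely many ``types'' with respect to the matching endpoints. For each type, I would construct a Turing fpt-reduction from $\#\pp{ColMatch}\bmod q$ (shown $\Mod_q\W$-hard in \cref{lem:reduction-to-colmatch}) to $\ModSub{\mathcal H}{q}$: given a host $G$ with color classes of size $n$, form an extended host $G^*$ by attaching, to each vertex and edge of $G$, copies of the ``ambient'' gadgets of $H$ outside the matching, counted modulo $q$ with multiplicities engineered so that non-matching-like placements cancel. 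Because $q$ is odd, the AND-style cancellation constructions used in \cref{lem:reduction-to-colmatch} continue to work. The combination of (i)~Ramsey extraction of the matching-plus-type structure and (ii)~modular gadget construction yields the desired reduction.

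\medskip\noindent
\textbf{The first bullet (unbounded matching-split number, modulus $2$).}
Here the seed is $\oplus\Sub$ for $k$-paths. The plan is a structural dichotomy: I claim that any class $\mathcal H$ with unbounded matching-split number contains, for every $k$, a pattern $H\in\mathcal H$ that carries either a long induced path, a long induced cycle, a large subdivided star, or a large grid-like minor structure as a subgraph. The justification is that being far from a matching forces a connected component of maximum degree at least~$2$ with many vertices; then a Ramsey/tree-width trichotomy (in the spirit of the Robertson--Seymour grid theorem combined with degree analysis) isolates one of these canonical shapes. For each shape I need a $\pW$-hardness result for $\oplus\Sub$: paths are handled by \cref{thm: hardness-paths}; cycles can be reduced from paths by a standard gadget ($s,t$-path plus a fresh back-edge); long subdivided stars and grid-like patterns already have $\sharpW$-hardness proofs in the literature, and one would verify that those reductions survive modulo~$2$ (most are parsimonious or can be made so). Finally, I would package all canonical cases into a single Turing fpt-reduction by first detecting, in fpt time on the pattern side, which canonical subpattern $\mathcal H$ realizes and then invoking the appropriate reduction.

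\medskip\noindent
\textbf{Main obstacles.}
The second bullet's extension from hereditary to arbitrary classes is the cleaner of the two: the gadget constructions are local and modulo-$q$ arithmetic with $q$ odd is forgiving, so the bottleneck is really proving a Ramsey-type lemma producing a matching together with a controlled ``environment''. The first bullet is considerably harder, because we lack a clean canonical form for ``far from matching''. Matching-split number is not known to be equivalent to any graph minor/subgraph parameter with a Ramsey-like extremal structure theorem, so the structural dichotomy above is the main conceptual gap, and the modulo-$2$ hardness for grid patterns in particular appears to require new gadgetry (existing grid reductions rely on interpolations that become degenerate in characteristic $2$, exactly as noted in the paper's closing remark on linear combinations of homomorphism counts).
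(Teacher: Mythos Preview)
The statement you are attempting to prove is labeled \texttt{\textbackslash begin\{conjecture\}} in the paper, and the surrounding text reads ``the following conjecture needs to be addressed.'' The paper contains \emph{no} proof of this statement; it is explicitly left open. So there is nothing to compare your attempt against---you are trying to resolve an open problem, not reproduce a known argument.

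That said, your proposal is a research sketch rather than a proof, and you essentially concede this yourself. For the second bullet, the paper already observes that the Ramsey extension works for \emph{hereditary} classes; the whole difficulty is the non-hereditary case, and your plan to attach ``ambient gadgets of $H$ outside the matching'' with multiplicities engineered to cancel modulo $q$ is a hope, not a construction. The obstacle is that for a non-hereditary class you cannot discard the rest of $H$: you must embed the full pattern $H$, and there is no reason the non-matching part of $H$ interacts with the host in a way that factors out as a unit modulo $q$. For the first bullet, your claimed structural dichotomy (unbounded matching-split number forces long induced paths, cycles, subdivided stars, or grid minors) is not a known theorem and is almost certainly false as stated---for instance, a disjoint union of many triangles has unbounded matching-split number but contains none of your canonical shapes of growing size. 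You correctly identify this as ``the main conceptual gap,'' which is an accurate self-assessment: it is the gap that makes the statement a conjecture.
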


An appropriate transfer of the subgraph-homomorphism framework to modular counting is likely to help in settling this conjecture.

\bibliographystyle{plainurl}
\bibliography{main}

\begin{thebibliography}{10}

\bibitem{DBLP:journals/algorithmica/AlonYZ97}
Noga Alon, Raphael Yuster, and Uri Zwick.
\newblock Finding and counting given length cycles.
\newblock {\em Algorithmica}, 17(3):209--223, 1997.
\newblock URL: \url{https://doi.org/10.1007/BF02523189}, \href
  {http://dx.doi.org/10.1007/BF02523189} {\path{doi:10.1007/BF02523189}}.

\bibitem{ArvindRaman}
Vikraman Arvind and Venkatesh Raman.
\newblock Approximation algorithms for some parameterized counting problems.
\newblock In Prosenjit Bose and Pat Morin, editors, {\em Algorithms and
  Computation, 13th International Symposium, {ISAAC} 2002 Vancouver, BC,
  Canada, November 21-23, 2002, Proceedings}, volume 2518 of {\em Lecture Notes
  in Computer Science}, pages 453--464. Springer, 2002.
\newblock URL: \url{https://doi.org/10.1007/3-540-36136-7_40}, \href
  {http://dx.doi.org/10.1007/3-540-36136-7_40}
  {\path{doi:10.1007/3-540-36136-7_40}}.

\bibitem{DBLP:conf/icalp/BjorklundDH15}
Andreas Bj{\"{o}}rklund, Holger Dell, and Thore Husfeldt.
\newblock The parity of set systems under random restrictions with applications
  to exponential time problems.
\newblock In Magn{\'{u}}s~M. Halld{\'{o}}rsson, Kazuo Iwama, Naoki Kobayashi,
  and Bettina Speckmann, editors, {\em Automata, Languages, and Programming -
  42nd International Colloquium, {ICALP} 2015, Kyoto, Japan, July 6-10, 2015,
  Proceedings, Part {I}}, volume 9134 of {\em Lecture Notes in Computer
  Science}, pages 231--242. Springer, 2015.
\newblock URL: \url{https://doi.org/10.1007/978-3-662-47672-7_19}, \href
  {http://dx.doi.org/10.1007/978-3-662-47672-7_19}
  {\path{doi:10.1007/978-3-662-47672-7_19}}.

\bibitem{BjorklundHusfeldt}
Andreas Bj{\"{o}}rklund and Thore Husfeldt.
\newblock The parity of directed {Hamiltonian} cycles.
\newblock In {\em 54th Annual {IEEE} Symposium on Foundations of Computer
  Science, {FOCS} 2013, 26-29 October, 2013, Berkeley, CA, {USA}}, pages
  727--735. {IEEE} Computer Society, 2013.
\newblock URL: \url{https://doi.org/10.1109/FOCS.2013.83}, \href
  {http://dx.doi.org/10.1109/FOCS.2013.83} {\path{doi:10.1109/FOCS.2013.83}}.

\bibitem{DBLP:journals/siamcomp/BjorklundH19}
Andreas Bj{\"{o}}rklund and Thore Husfeldt.
\newblock Shortest two disjoint paths in polynomial time.
\newblock {\em {SIAM} J. Comput.}, 48(6):1698--1710, 2019.
\newblock URL: \url{https://doi.org/10.1137/18M1223034}, \href
  {http://dx.doi.org/10.1137/18M1223034} {\path{doi:10.1137/18M1223034}}.

\bibitem{BHL15}
Andreas Bj{\"{o}}rklund, Thore Husfeldt, and Isak Lyckberg.
\newblock Computing the permanent modulo a prime power.
\newblock {\em Inf. Process. Lett.}, 125:20--25, 2017.
\newblock URL: \url{https://doi.org/10.1016/j.ipl.2017.04.015}, \href
  {http://dx.doi.org/10.1016/j.ipl.2017.04.015}
  {\path{doi:10.1016/j.ipl.2017.04.015}}.

\bibitem{BODLAENDER19931}
Hans~L. Bodlaender.
\newblock On linear time minor tests with depth-first search.
\newblock {\em Journal of Algorithms}, 14(1):1--23, jan 1993.
\newblock URL: \url{https://doi.org/10.1006/jagm.1993.1001}, \href
  {http://dx.doi.org/10.1006/jagm.1993.1001}
  {\path{doi:10.1006/jagm.1993.1001}}.

\bibitem{DBLP:conf/wg/ChenGL17}
Yijia Chen, Martin Grohe, and Bingkai Lin.
\newblock The hardness of embedding grids and walls.
\newblock In Hans~L. Bodlaender and Gerhard~J. Woeginger, editors, {\em
  Graph-Theoretic Concepts in Computer Science - 43rd International Workshop,
  {WG} 2017, Eindhoven, The Netherlands, June 21-23, 2017, Revised Selected
  Papers}, volume 10520 of {\em Lecture Notes in Computer Science}, pages
  180--192. Springer, 2017.
\newblock URL: \url{https://doi.org/10.1007/978-3-319-68705-6_14}, \href
  {http://dx.doi.org/10.1007/978-3-319-68705-6_14}
  {\path{doi:10.1007/978-3-319-68705-6_14}}.

\bibitem{DBLP:phd/dnb/Curticapean15}
Radu Curticapean.
\newblock {\em The simple, little and slow things count: on parameterized
  counting complexity}.
\newblock PhD thesis, Saarland University, 2015.
\newblock URL: \url{http://scidok.sulb.uni-saarland.de/volltexte/2015/6217/}.

\bibitem{DBLP:conf/stoc/CurticapeanDM17}
Radu Curticapean, Holger Dell, and D{\'{a}}niel Marx.
\newblock Homomorphisms are a good basis for counting small subgraphs.
\newblock In Hamed Hatami, Pierre McKenzie, and Valerie King, editors, {\em
  Proceedings of the 49th Annual {ACM} {SIGACT} Symposium on Theory of
  Computing, {STOC} 2017, Montreal, QC, Canada, June 19-23, 2017}, pages
  210--223. {ACM}, 2017.
\newblock URL: \url{https://doi.org/10.1145/3055399.3055502}, \href
  {http://dx.doi.org/10.1145/3055399.3055502}
  {\path{doi:10.1145/3055399.3055502}}.

\bibitem{DBLP:conf/focs/CurticapeanM14}
Radu Curticapean and D{\'{a}}niel Marx.
\newblock Complexity of counting subgraphs: Only the boundedness of the
  vertex-cover number counts.
\newblock In {\em 55th {IEEE} Annual Symposium on Foundations of Computer
  Science, {FOCS} 2014, Philadelphia, PA, USA, October 18-21, 2014}, pages
  130--139. {IEEE} Computer Society, 2014.
\newblock URL: \url{https://doi.org/10.1109/FOCS.2014.22}, \href
  {http://dx.doi.org/10.1109/FOCS.2014.22} {\path{doi:10.1109/FOCS.2014.22}}.

\bibitem{DBLP:journals/corr/CurticapeanM14}
Radu Curticapean and D{\'{a}}niel Marx.
\newblock Complexity of counting subgraphs: Only the boundedness of the
  vertex-cover number counts.
\newblock {\em CoRR}, abs/1407.2929, 2014.
\newblock \href {http://arxiv.org/abs/1407.2929} {\path{arXiv:1407.2929}}.

\bibitem{DBLP:conf/focs/CurticapeanX15}
Radu Curticapean and Mingji Xia.
\newblock Parameterizing the permanent: Genus, apices, minors, evaluation mod
  2k.
\newblock In Venkatesan Guruswami, editor, {\em {IEEE} 56th Annual Symposium on
  Foundations of Computer Science, {FOCS} 2015, Berkeley, CA, USA, 17-20
  October, 2015}, pages 994--1009. {IEEE} Computer Society, 2015.
\newblock URL: \url{https://doi.org/10.1109/FOCS.2015.65}, \href
  {http://dx.doi.org/10.1109/FOCS.2015.65} {\path{doi:10.1109/FOCS.2015.65}}.

\bibitem{DBLP:books/sp/CyganFKLMPPS15}
Marek Cygan, Fedor~V. Fomin, Lukasz Kowalik, Daniel Lokshtanov, D{\'{a}}niel
  Marx, Marcin Pilipczuk, Michal Pilipczuk, and Saket Saurabh.
\newblock {\em Parameterized Algorithms}.
\newblock Springer, 2015.
\newblock URL: \url{https://doi.org/10.1007/978-3-319-21275-3}, \href
  {http://dx.doi.org/10.1007/978-3-319-21275-3}
  {\path{doi:10.1007/978-3-319-21275-3}}.

\bibitem{DBLP:journals/tcs/DalmauJ04}
V{\'{\i}}ctor Dalmau and Peter Jonsson.
\newblock The complexity of counting homomorphisms seen from the other side.
\newblock {\em Theor. Comput. Sci.}, 329(1-3):315--323, 2004.
\newblock URL: \url{https://doi.org/10.1016/j.tcs.2004.08.008}, \href
  {http://dx.doi.org/10.1016/j.tcs.2004.08.008}
  {\path{doi:10.1016/j.tcs.2004.08.008}}.

\bibitem{DBLP:conf/mfcs/DorflerRSW19}
Julian D{\"{o}}rfler, Marc Roth, Johannes Schmitt, and Philip Wellnitz.
\newblock Counting induced subgraphs: An algebraic approach to
  \#{W}[1]-hardness.
\newblock In Peter Rossmanith, Pinar Heggernes, and Joost{-}Pieter Katoen,
  editors, {\em 44th International Symposium on Mathematical Foundations of
  Computer Science, {MFCS} 2019, August 26-30, 2019, Aachen, Germany}, volume
  138 of {\em LIPIcs}, pages 26:1--26:14. Schloss Dagstuhl - Leibniz-Zentrum
  f{\"{u}}r Informatik, 2019.
\newblock URL: \url{https://doi.org/10.4230/LIPIcs.MFCS.2019.26}, \href
  {http://dx.doi.org/10.4230/LIPIcs.MFCS.2019.26}
  {\path{doi:10.4230/LIPIcs.MFCS.2019.26}}.

\bibitem{DBLP:journals/corr/abs-0809-1836}
John~D. Faben.
\newblock The complexity of counting solutions to generalised satisfiability
  problems modulo k.
\newblock {\em CoRR}, abs/0809.1836, 2008.
\newblock \href {http://arxiv.org/abs/0809.1836} {\path{arXiv:0809.1836}}.

\bibitem{DBLP:journals/toc/FabenJ15}
John~D. Faben and Mark Jerrum.
\newblock The complexity of parity graph homomorphism: An initial
  investigation.
\newblock {\em Theory Comput.}, 11:35--57, 2015.
\newblock URL: \url{https://doi.org/10.4086/toc.2015.v011a002}, \href
  {http://dx.doi.org/10.4086/toc.2015.v011a002}
  {\path{doi:10.4086/toc.2015.v011a002}}.

\bibitem{DBLP:journals/siamcomp/FlumG04}
J{\"{o}}rg Flum and Martin Grohe.
\newblock The parameterized complexity of counting problems.
\newblock {\em {SIAM} J. Comput.}, 33(4):892--922, 2004.
\newblock URL: \url{https://doi.org/10.1137/S0097539703427203}, \href
  {http://dx.doi.org/10.1137/S0097539703427203}
  {\path{doi:10.1137/S0097539703427203}}.

\bibitem{FlumGrohe}
J{\"{o}}rg Flum and Martin Grohe.
\newblock {\em Parameterized Complexity Theory}.
\newblock Texts in Theoretical Computer Science. An {EATCS} Series. Springer,
  2006.
\newblock URL: \url{https://doi.org/10.1007/3-540-29953-X}, \href
  {http://dx.doi.org/10.1007/3-540-29953-X} {\path{doi:10.1007/3-540-29953-X}}.

\bibitem{doi:10.1137/1.9781611976465.137}
Jacob Focke, Leslie~Ann Goldberg, Marc Roth, and Stanislav {\v{Z}}ivn{\'{y}}.
\newblock Counting homomorphisms to {K4}-minor-free graphs, modulo 2.
\newblock In {\em Proceedings of the 2021 {ACM}-{SIAM} Symposium on Discrete
  Algorithms ({SODA})}, pages 2303--2314. Society for Industrial and Applied
  Mathematics, jan 2021.
\newblock URL: \url{https://doi.org/10.1137/1.9781611976465.137}, \href
  {http://dx.doi.org/10.1137/1.9781611976465.137}
  {\path{doi:10.1137/1.9781611976465.137}}.

\bibitem{DBLP:journals/jcss/FominLRSR12}
Fedor~V. Fomin, Daniel Lokshtanov, Venkatesh Raman, Saket Saurabh, and
  B.~V.~Raghavendra Rao.
\newblock Faster algorithms for finding and counting subgraphs.
\newblock {\em J. Comput. Syst. Sci.}, 78(3):698--706, 2012.
\newblock URL: \url{https://doi.org/10.1016/j.jcss.2011.10.001}, \href
  {http://dx.doi.org/10.1016/j.jcss.2011.10.001}
  {\path{doi:10.1016/j.jcss.2011.10.001}}.

\bibitem{DBLP:journals/toct/0001GR14}
Andreas G{\"{o}}bel, Leslie~Ann Goldberg, and David Richerby.
\newblock The complexity of counting homomorphisms to cactus graphs modulo 2.
\newblock {\em {ACM} Trans. Comput. Theory}, 6(4):17:1--17:29, 2014.
\newblock URL: \url{https://doi.org/10.1145/2635825}, \href
  {http://dx.doi.org/10.1145/2635825} {\path{doi:10.1145/2635825}}.

\bibitem{DBLP:journals/toct/0001GR16}
Andreas G{\"{o}}bel, Leslie~Ann Goldberg, and David Richerby.
\newblock Counting homomorphisms to square-free graphs, modulo 2.
\newblock {\em {ACM} Trans. Comput. Theory}, 8(3):12:1--12:29, 2016.
\newblock URL: \url{https://doi.org/10.1145/2898441}, \href
  {http://dx.doi.org/10.1145/2898441} {\path{doi:10.1145/2898441}}.

\bibitem{DBLP:conf/stacs/GuoHLX11}
Heng Guo, Sangxia Huang, Pinyan Lu, and Mingji Xia.
\newblock The complexity of weighted {Boolean} \#{CSP} modulo k.
\newblock In Thomas Schwentick and Christoph D{\"{u}}rr, editors, {\em 28th
  International Symposium on Theoretical Aspects of Computer Science, {STACS}
  2011, March 10-12, 2011, Dortmund, Germany}, volume~9 of {\em LIPIcs}, pages
  249--260. Schloss Dagstuhl - Leibniz-Zentrum f{\"{u}}r Informatik, 2011.
\newblock URL: \url{https://doi.org/10.4230/LIPIcs.STACS.2011.249}, \href
  {http://dx.doi.org/10.4230/LIPIcs.STACS.2011.249}
  {\path{doi:10.4230/LIPIcs.STACS.2011.249}}.

\bibitem{HiraiNamba}
Hiroshi Hirai and Hiroyuki Namba.
\newblock Shortest (a+b)-path packing via {Hafnian}.
\newblock {\em Algorithmica}, 80(8):2478--2491, 2018.
\newblock URL: \url{https://doi.org/10.1007/s00453-017-0334-0}, \href
  {http://dx.doi.org/10.1007/s00453-017-0334-0}
  {\path{doi:10.1007/s00453-017-0334-0}}.

\bibitem{hoory2006expander}
Shlomo Hoory, Nathan Linial, and Avi Wigderson.
\newblock Expander graphs and their applications.
\newblock {\em Bulletin of the American Mathematical Society}, 43(04):439--562,
  aug 2006.
\newblock URL: \url{https://doi.org/10.1090/s0273-0979-06-01126-8}, \href
  {http://dx.doi.org/10.1090/s0273-0979-06-01126-8}
  {\path{doi:10.1090/s0273-0979-06-01126-8}}.

\bibitem{JansenMarx}
Bart M.~P. Jansen and D{\'{a}}niel Marx.
\newblock Characterizing the easy-to-find subgraphs from the viewpoint of
  polynomial-time algorithms, kernels, and {Turing} kernels.
\newblock In Piotr Indyk, editor, {\em Proceedings of the Twenty-Sixth Annual
  {ACM-SIAM} Symposium on Discrete Algorithms, {SODA} 2015, San Diego, CA, USA,
  January 4-6, 2015}, pages 616--629. {SIAM}, 2015.
\newblock URL: \url{https://doi.org/10.1137/1.9781611973730.42}, \href
  {http://dx.doi.org/10.1137/1.9781611973730.42}
  {\path{doi:10.1137/1.9781611973730.42}}.

\bibitem{DBLP:conf/mfcs/KazeminiaB19}
Amirhossein Kazeminia and Andrei~A. Bulatov.
\newblock Counting homomorphisms modulo a prime number.
\newblock In Peter Rossmanith, Pinar Heggernes, and Joost{-}Pieter Katoen,
  editors, {\em 44th International Symposium on Mathematical Foundations of
  Computer Science, {MFCS} 2019, August 26-30, 2019, Aachen, Germany}, volume
  138 of {\em LIPIcs}, pages 59:1--59:13. Schloss Dagstuhl - Leibniz-Zentrum
  f{\"{u}}r Informatik, 2019.
\newblock URL: \url{https://doi.org/10.4230/LIPIcs.MFCS.2019.59}, \href
  {http://dx.doi.org/10.4230/LIPIcs.MFCS.2019.59}
  {\path{doi:10.4230/LIPIcs.MFCS.2019.59}}.

\bibitem{DBLP:journals/rsa/KimSV02}
Jeong~Han Kim, Benny Sudakov, and Van~H. Vu.
\newblock On the asymmetry of random regular graphs and random graphs.
\newblock {\em Random Struct. Algorithms}, 21(3-4):216--224, 2002.
\newblock URL: \url{https://doi.org/10.1002/rsa.10054}, \href
  {http://dx.doi.org/10.1002/rsa.10054} {\path{doi:10.1002/rsa.10054}}.

\bibitem{Koutis08}
Ioannis Koutis.
\newblock Faster algebraic algorithms for path and packing problems.
\newblock In Luca Aceto, Ivan Damg{\aa}rd, Leslie~Ann Goldberg, Magn{\'{u}}s~M.
  Halld{\'{o}}rsson, Anna Ing{\'{o}}lfsd{\'{o}}ttir, and Igor Walukiewicz,
  editors, {\em Automata, Languages and Programming, 35th International
  Colloquium, {ICALP} 2008, Reykjavik, Iceland, July 7-11, 2008, Proceedings,
  Part {I:} Tack {A:} Algorithms, Automata, Complexity, and Games}, volume 5125
  of {\em Lecture Notes in Computer Science}, pages 575--586. Springer, 2008.
\newblock URL: \url{https://doi.org/10.1007/978-3-540-70575-8_47}, \href
  {http://dx.doi.org/10.1007/978-3-540-70575-8_47}
  {\path{doi:10.1007/978-3-540-70575-8_47}}.

\bibitem{DBLP:journals/siamdm/KowalukLL13}
Miroslaw Kowaluk, Andrzej Lingas, and Eva{-}Marta Lundell.
\newblock Counting and detecting small subgraphs via equations.
\newblock {\em {SIAM} J. Discret. Math.}, 27(2):892--909, 2013.
\newblock URL: \url{https://doi.org/10.1137/110859798}, \href
  {http://dx.doi.org/10.1137/110859798} {\path{doi:10.1137/110859798}}.

\bibitem{DBLP:journals/jacm/Lin18}
Bingkai Lin.
\newblock The parameterized complexity of the \emph{k}-biclique problem.
\newblock {\em J. {ACM}}, 65(5):34:1--34:23, 2018.
\newblock URL: \url{https://doi.org/10.1145/3212622}, \href
  {http://dx.doi.org/10.1145/3212622} {\path{doi:10.1145/3212622}}.

\bibitem{DBLP:journals/toc/Marx10}
D{\'{a}}niel Marx.
\newblock Can you beat treewidth?
\newblock {\em Theory Comput.}, 6(1):85--112, 2010.
\newblock URL: \url{https://doi.org/10.4086/toc.2010.v006a005}, \href
  {http://dx.doi.org/10.4086/toc.2010.v006a005}
  {\path{doi:10.4086/toc.2010.v006a005}}.

\bibitem{MONIEN1985239}
Burkhard Monien.
\newblock How to find long paths efficiently.
\newblock In {\em Analysis and Design of Algorithms for Combinatorial
  Problems}, pages 239--254. Elsevier, 1985.
\newblock URL: \url{https://doi.org/10.1016/s0304-0208(08)73110-4}, \href
  {http://dx.doi.org/10.1016/s0304-0208(08)73110-4}
  {\path{doi:10.1016/s0304-0208(08)73110-4}}.

\bibitem{DBLP:conf/wg/PlehnV90}
J{\"{u}}rgen Plehn and Bernd Voigt.
\newblock Finding minimally weighted subgraphs.
\newblock In Rolf~H. M{\"{o}}hring, editor, {\em Graph-Theoretic Concepts in
  Computer Science, 16rd International Workshop, {WG} '90, Berlin, Germany,
  June 20-22, 1990, Proceedings}, volume 484 of {\em Lecture Notes in Computer
  Science}, pages 18--29. Springer, 1990.
\newblock URL: \url{https://doi.org/10.1007/3-540-53832-1_28}, \href
  {http://dx.doi.org/10.1007/3-540-53832-1_28}
  {\path{doi:10.1007/3-540-53832-1_28}}.

\bibitem{DBLP:conf/esa/Roth17}
Marc Roth.
\newblock Counting restricted homomorphisms via {M\"{o}}bius inversion over
  matroid lattices.
\newblock In Kirk Pruhs and Christian Sohler, editors, {\em 25th Annual
  European Symposium on Algorithms, {ESA} 2017, September 4-6, 2017, Vienna,
  Austria}, volume~87 of {\em LIPIcs}, pages 63:1--63:14. Schloss Dagstuhl -
  Leibniz-Zentrum f{\"{u}}r Informatik, 2017.
\newblock URL: \url{https://doi.org/10.4230/LIPIcs.ESA.2017.63}, \href
  {http://dx.doi.org/10.4230/LIPIcs.ESA.2017.63}
  {\path{doi:10.4230/LIPIcs.ESA.2017.63}}.

\bibitem{DBLP:journals/algorithmica/RothS20}
Marc Roth and Johannes Schmitt.
\newblock Counting induced subgraphs: {A} topological approach to
  \#{W}[1]-hardness.
\newblock {\em Algorithmica}, 82(8):2267--2291, 2020.
\newblock URL: \url{https://doi.org/10.1007/s00453-020-00676-9}, \href
  {http://dx.doi.org/10.1007/s00453-020-00676-9}
  {\path{doi:10.1007/s00453-020-00676-9}}.

\bibitem{DBLP:conf/focs/Roth0W20}
Marc Roth, Johannes Schmitt, and Philip Wellnitz.
\newblock Counting small induced subgraphs satisfying monotone properties.
\newblock In {\em 61st {IEEE} Annual Symposium on Foundations of Computer
  Science, {FOCS} 2020, Durham, NC, USA, November 16-19, 2020}, pages
  1356--1367. {IEEE}, 2020.
\newblock URL: \url{https://doi.org/10.1109/FOCS46700.2020.00128}, \href
  {http://dx.doi.org/10.1109/FOCS46700.2020.00128}
  {\path{doi:10.1109/FOCS46700.2020.00128}}.

\bibitem{Toda}
Seinosuke Toda.
\newblock {PP} is as hard as the polynomial-time hierarchy.
\newblock {\em {SIAM} J. Comput.}, 20(5):865--877, 1991.
\newblock URL: \url{https://doi.org/10.1137/0220053}, \href
  {http://dx.doi.org/10.1137/0220053} {\path{doi:10.1137/0220053}}.

\bibitem{DBLP:journals/siamcomp/Valiant08}
Leslie~G. Valiant.
\newblock Holographic algorithms.
\newblock {\em {SIAM} J. Comput.}, 37(5):1565--1594, 2008.
\newblock URL: \url{https://doi.org/10.1137/070682575}, \href
  {http://dx.doi.org/10.1137/070682575} {\path{doi:10.1137/070682575}}.

\bibitem{DBLP:conf/soda/WilliamsWWY15}
Virginia Vassilevska{ }Williams, Joshua~R. Wang, Richard~Ryan Williams, and
  Huacheng Yu.
\newblock Finding four-node subgraphs in triangle time.
\newblock In Piotr Indyk, editor, {\em Proceedings of the Twenty-Sixth Annual
  {ACM-SIAM} Symposium on Discrete Algorithms, {SODA} 2015, San Diego, CA, USA,
  January 4-6, 2015}, pages 1671--1680. {SIAM}, 2015.
\newblock URL: \url{https://doi.org/10.1137/1.9781611973730.111}, \href
  {http://dx.doi.org/10.1137/1.9781611973730.111}
  {\path{doi:10.1137/1.9781611973730.111}}.

\bibitem{DBLP:journals/siamcomp/WilliamsW13}
Virginia Vassilevska{ }Williams and Ryan Williams.
\newblock Finding, minimizing, and counting weighted subgraphs.
\newblock {\em {SIAM} J. Comput.}, 42(3):831--854, 2013.
\newblock URL: \url{https://doi.org/10.1137/09076619X}, \href
  {http://dx.doi.org/10.1137/09076619X} {\path{doi:10.1137/09076619X}}.

\bibitem{wormald1999models}
Nicholas~C. Wormald.
\newblock Models of random regular graphs.
\newblock {\em London Mathematical Society Lecture Note Series}, pages
  239--298, 1999.

\end{thebibliography}

\appendix

\section{Omitted proofs from \cref{sec: preliminaries,sec: hardness-matchings}}

\begin{proof}[Proof of~\cref{lem:decision-to-parity}]
  The reduction repeats the following $O(2^{\abs{E(H)}})$ times:
  Obtain~$G'$ from~$G$ by deleting each edge independently with probability $1/2$, and query the oracle for~$\#\vcSub(H,G')\bmod q$. If all queries return~$0$, then return `no'. Otherwise, return `yes'.

  First note that the $H$-coloring $c$ for $G$ that is given as additional input is also an $H$-coloring of~$G'$, so the queries are valid.
  If~$G$ does not have colorful $H$-isomorphic subgraphs, then~$G'$ does not have such subgraphs either, and our reduction correctly returns `no'.
  For the other direction, suppose now that $G$ has at least one colorful $H$-isomorphic subgraph.
  We define a multilinear degree-$\abs{E(H)}$ polynomial over the variables $x_e$ with $e\in E(G)$:
  \begin{equation*}
    P(x)=\sum_{F} \prod_{e\in E(F)} x_e\,.
  \end{equation*}
  Here, the sum is over all colorful subgraphs~$F$ of~$G$ that are isomorphic to~$H$.
  The total number of colorful $H$-isomorphic subgraphs is the evaluation of this polynomial at the all-ones vector, and since at least one such colorful subgraph exists, $P(x)$ is not identical to~$0$.
  Moreover, choosing a random subgraph~$G'$ corresponds naturally to sampling all $x_e\in\set{0,1}$ independently and uniformly at random.
  In that case, $P(x)$ is equal to the number of colorful~$H$-isomorphic subgraphs of~$G'$.
  By~\cite[Lemma~2.1]{DBLP:conf/soda/WilliamsWWY15}, we then have with probability at least $2^{-\abs{E(H)}}$ over the random choice of~$x$ that $P(x)$ is not a multiple of~$q$, and so this is the probability over the random choice of~$G'$ that $\#\vcSub(H,G')\bmod q$ is not equal to~$0$.
  Repeating this $O(2^{\abs{E(H)}})$ times reduces the error probability to an arbitrarily small constant.
\end{proof}

\begin{proof}[Proof of~\cref{cor: hardness-tw}]
The problem $\vcSub(\classH)$ parameterized by $k=\abs{E(H)}$ is known to be $\W$-hard~\cite{DBLP:journals/tcs/DalmauJ04}.
  Moreover, if ETH is true, then the problem does not have an algorithm running 
  in time $n^{o(k/\log k)}$, where $n=\abs{V(G)}$, as shown by Marx~\cite{DBLP:journals/toc/Marx10}.

  The first claim of the lemma now follows, because the parameterized reduction from the $k$-clique problem to $\vcSub(\classH)$ in~\cite{DBLP:journals/toc/Marx10} is parsimonious.
  See, for example,~\cite[Theorem~5.6]{DBLP:phd/dnb/Curticapean15} for a self-contained proof.
  
  We prove the second claim: By~\cref{lem:decision-to-parity}, there is a randomized Turing fpt-reduction from $\vcSub(\classH)$ to $\#\vcSub(\classH)\bmod q$ that on input $(H,G)$ only makes queries $(H,G')$ with the same pattern~$H$. Thus, an $n^{o(k/\log k)}$-time algorithm for $\#\vcSub(\classH)\bmod q$ would imply an algorithm with asymptotically the same running time for the decision problem, which however is ruled out under ETH:
  Let $\classH$ be a graph family of unbounded tree-width.
  Then the problem $\vcSub(\classH)$ parameterized by $k=\abs{E(H)}$ is $\W$-hard.
  Moreover, if ETH is true, then the problem does not have an algorithm running 
  in time $n^{o(k/\log k)}$, where $n=\abs{V(G)}$.
\end{proof}

\begin{proof}[Proof of~\cref{claim: irrelevant-matchings}]
Call a colorful matching $M$ in $G'$ \emph{consistent} if every consistency gadget in $G'$ has either $0$ or $3$ edges in $M$.
We first prove that non-consistent matchings cancel modulo $q$.
For example, suppose that gadget $Q_u$ for some $u \in V_a$ for $a\in V(H)$ only matches $\set{u_1}$. In this case, since colorful matchings include exactly one edge from each of the colors $(\mathtt{CONS},a,i)$ for $i \in \set{1,2,3}$, one of the following occurs:
\begin{enumerate}
  \item The consistency gadget $Q_v$ for some other vertex $v\in V(G)$ only matches $\set{v_2, v_3}$. By~\eqref{eq: consistency-gadget-extensions}, the number of such matchings is $0$ modulo $q$.
  \item For some vertices $v,w \in V(G)$, the gadget $Q_v$ only matches $\set{v_2}$ and $Q_w$ only matches $\set{w_3}$. Again by~\eqref{eq: consistency-gadget-extensions}, the number of such matchings is $0$ modulo $q$.
\end{enumerate}
The same argument applies when $Q_u$ only matches $\set{u_2}$ or $\set{u_3}$.
It follows that non-consistent matchings cancel modulo $q$, 
and we can restrict our attention to the set of consistent matchings.
This set can be partitioned into sets $\mathcal{D}_X$ for $X \subseteq V(G)$ with $|X \cap V_a| = 1$ for $a\in V(H)$: 
The set $\mathcal{D}_X$ contains all matchings that match $\set{u_1,u_2,u_3}$ within $Q_u$ for all $u \in X$.
We show that $|\mathcal{D}_X| \equiv_q 0$ if $X$ does not correspond to the vertex set of an $H$-copy:
If there is some edge $\{a,b\} \in E(H)$ such that the unique vertices $u \in X \cap V_a$ and $v \in X \cap V_b$ are not connected in $G$, then the number of edges of color $(\mathtt{AND},ab)$ that extend matchings of consistency gadgets is $0$ modulo $q$.
It follows that $|\mathcal{D}_X| \equiv_q 0$ if $X$ is not the vertex set of an $H$-copy in $G$, and we conclude that matchings $M \notin \bigcup_{F} \mathcal{M}_F$ indeed cancel modulo $q$.
\end{proof}

\begin{proof}[Proof of \cref{thm: hardness-matchings}]
  By \cref{lem:reduction-to-colmatch}, it suffices to reduce the problem ${\#\pp{ColMatch}\bmod q}$ to counting $k$-matchings modulo $q$.
  Let $(G,c)$ be an input for the reduction, where $c\colon E(G)\to\mathcal C$ and $\abs{C}=k$.
  By inclusion-exclusion,
  \[
    \#\pp{ColMatch}(G,c) = \sum_{S\subseteq \mathcal C} \paren{-1}^{\abs{S}}\cdot\#\pp{Match}(G_S,k)\,,
  \]
  where $G_S$ is derived from $G$ by deleting all edges~$e$ with $c(e)\in S$ and $\#\pp{Match}(G_S,k)$ denotes the number of $k$-matchings in~$G_S$.
  This reduction runs in time $2^k\poly(n)$ and all queries use the same parameter~$k$.
  The claim follows.
\end{proof}

\section{Omitted material from \cref{sec: algorithm}}

  \begin{proof}[Proof of~\cref{lem:rigidize}]
    Step R1 can be performed by inspecting all subsets of $H$ of size at most $s$ in time $h^{O(s)}$; this part of the overall 
    algorithm for modular counting will not be the bottleneck, and we make no 
    attempt at improving the running time.
    Steps R2 and R3 can be carried out in time $\poly(h)$.
    This proves the claim for the running time.
  
    Since $R\subseteq T$ holds, it suffices to show $\abs{T}\le O(s^2)$ 
    for the first claim.
    Indeed, we have $\abs{S}\leq s$ and each vertex of $S\cap D$ has at 
    most $s+1$ neighbors~$w\in\overline S$.
    Moreover, each such~$w$ has at most one other neighbor in $\overline S$ since 
    $H-S$ has maximum degree one.
    Overall, R2 adds at most $\abs{S\cap D}\cdot 2(s+1)$ neighbors to~$T$, 
    so we can bound the size of the extended splitting set as
    \begin{equation*}
    \abs{T}
    \leq \abs{S} + \abs{S\cap D} \cdot 2(s+1)
    \leq O(s^2)
    \,.
    \end{equation*}
    This proves the first claim.
  
    \medskip
    For the second claim, we first note that $H-T$ is a matching, because~$S$ is a splitting set 
    and $S\subseteq T$.
    Let $C$ be a component of $H-R$.
    If ${V(C)\cap T=\emptyset}$, then $C$ is a component of $H-T$ and therefore has at most 
    two vertices.
    Otherwise, some vertices of $C$ must have been removed from~$R$ by Step~R3.
    The only vertices that R3 removes from~$R$ belong to~$T\cap D$, which has no 
    neighbors in~$\overline T$ because of Step~R2.
    Therefore, $V(C)$ is entirely contained in $T\cap D$.
    But then $C$ is a component of $H[T\cap D]$ that got removed from~$R$ by Step~R3, which implies that~$C$ has at most two vertices.
    This proves the second claim.
    
    \medskip
    For the third claim, let $f$ be an automorphism of~$H$.
    We want to show $f(R)=R$.
    We first observe that $\deg f(v) = \deg v$:
    Since $f$ is a homomorphism, if $u$ is a neighbor of $v$ then $f(u)$ is a neighbor of $f(v)$, and 
    since $f$ is injective, the neighbors $f(N(v))$ are all different.
 
    We proceed to show $f(v)\in R$ for each $v\in R$.
    There are two cases.
    First, consider $v \in R \setminus D$.
    We first show that $f(v)$ belongs to the splitting set $S$.
    Indeed, a vertex~${w\notin S}$ belongs to the matching $H\setminus S$ and therefore has at most one neighbor outside $S$. 
    In particular $\deg w\leq |S| + 1 = s + 1$.
    But ${\deg f(v) = \deg v > s + 1}$, because $v\notin D$.
    Thus, $f(v)\in S\subseteq T$.
    Since Step~R3 only removes small-degree vertices, the vertex $f(v)$ remains in the refined set~$R$.
  
    Now consider $v \in R \cap D$.
    Let~$C$ be the component of~$v$ in the graph $H[R\cap D]$.
    We have $\abs{V(C)}\geq 3$, since~$C$ would otherwise have been removed from~$R$ by R3.
    Since $C$ contains only vertices from~$D$, its image under~$f$ also satisfies 
    $f(V(C))\subseteq D$.
    Since there is no edge from $T\cap D$ to $\overline T$ because of Step~R2, the image $f(V(C))$ is either entirely contained in $T\cap D$ or in $\overline T$.
    Since $f(V(C))$ has three or more vertices and $T$ is a splitting set, the latter 
    cannot be the case.
    Finally, $C$ is too large to have been removed from $T$ in Step~R3, so $f(V(C))\subseteq R$ and thus $f(v)\in R$.
  
    We conclude $f(R)\subseteq R$, which proves the third claim.
  \end{proof}

  \subsection{Counting matchings over polynomial rings mod $2^t$}%
  \label{sec:Hafnian}
  
  In this section, we state an algorithm for computing the Hafnian over polynomial rings modulo~$2^t$, which we use in the proof of \cref{lem:color-demands}.
  Note that the related problem of computing permanents modulo prime powers has also been studied~\cite{BHL15};
  permanents over the quotient ring $\Z_4 [X] / ( X^m)$ were used to obtain a polynomial-time algorithm for finding disjoint paths of minimum total length~\cite{DBLP:journals/siamcomp/BjorklundH19}.
  
  Let $n$ be an even integer and let $A$ denote a symmetric $n\times n$ matrix with 
  entries $(a_{ij})$.
  The \emph{Hafnian} of $A$, denoted by $\haf A$, is defined by
  \[ \haf A = \sum_M \prod_{\{i,j\}\in M} a_{ij}\,,\]
  where the sum is over all partitions $M$ of $\{1,\ldots, n\}$ into sets of size $2$.
  If $A$ denotes a weighted adjacency matrix of a graph, then~$M$ contributes to $\haf A$ only if $M$ is a perfect matching.
  
  We consider this expression when $A$ is a matrix of $r$-variate polynomials with integer coefficients modulo $q$.
  In other words, the sum and product in the definition of the Hafnian are computed in the ring $\Z_q[X_1,\ldots, X_r]$.
  We use $\haf A\bmod q$ to denote this polynomial.
  Let us remark that we consider the ring to be an abstract polynomial ring, so in particular $X_i^j\ne X_i^k$ holds whenever $j\ne k$.
  Moreover, we represent polynomials explicitly by listing all non-zero coefficients.

  For our purposes, it will be sufficient to consider Hafnians of matrices whose entries are linear forms (polynomials of degree at most $1$) in the variables $X_1,\ldots, X_r$, but the proof of the following theorem could also be extended to a setting where the matrix entries are bounded-degree polynomials.
  \begin{theorem}\label{thm:haf}
    Let $t$ and $r$ be positive integers and $q=2^t$.
    Given an $n\times n$ symmetric matrix~$A$, whose diagonal is zero and whose entries are $r$-variate linear forms in~$\Z_q[X_1,\ldots, X_r]$, there is an algorithm that computes $\haf A\bmod q$ in time $n^{O(tr)}$.
  \end{theorem}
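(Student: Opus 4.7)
The plan is to invoke Hirai and Namba's algorithm~\cite{HiraiNamba}, which computes $\haf B \bmod 2^t$ for a numerical symmetric matrix $B \in \Z_q^{n\times n}$ in time $n^{O(t)}$, and to run it symbolically over the polynomial ring $R=\Z_q[X_1,\ldots,X_r]$ in place of $\Z_q$. The first step is to verify their algorithm is division-free (or divides only by units of $\Z_q$), so that it lifts to an arbitrary commutative $\Z_q$-algebra. This is plausible: $\haf A \bmod 2^t$ can be written as a $\Z_q$-linear combination of $n^{O(t)}$ Pfaffians of antisymmetric matrices obtained from $A$ by sign assignments and minor modifications, and a Pfaffian itself admits a division-free combinatorial evaluation in $\poly(n)$ ring operations. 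Altogether the algorithm performs $N(n,t) = n^{O(t)}$ additions, subtractions, and multiplications in the ring containing the matrix entries, together with multiplications by fixed constants in $\Z_q$.

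Next, I would bound the cost of each ring operation when $R$ replaces $\Z_q$. Since every entry of $A$ is a linear form, every summand of $\haf A$ is a product of $n/2$ linear forms and hence has total degree at most $n/2$; the same bound propagates through the Pfaffian recurrences, and I would truncate all intermediate polynomials to degree at most $n/2$ (discarded monomials could not contribute to $\haf A$). Representing each polynomial explicitly by its monomial-coefficient list modulo $2^t$, there are at most $\binom{n/2+r}{r} = n^{O(r)}$ monomials to store, so each addition or multiplication-with-truncation takes $n^{O(r)}$ time. Multiplying this per-operation cost by the $n^{O(t)}$ operations of Hirai--Namba yields total running time $n^{O(t+r)} \subseteq n^{O(tr)}$, as claimed.

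The main obstacle I foresee is confirming that Hirai and Namba's algorithm is genuinely division-free over $\Z_q$: the polynomial ring $R$ contains many non-units, so any Gaussian-elimination step that forces a non-unit pivot would break the symbolic execution. Should their algorithm internally require such a division at some step, the fallback is a multivariate evaluation--interpolation scheme: evaluate the matrix at $n^{O(r)}$ well-chosen points of $\Z_q^r$, invoke the numerical Hirai--Namba algorithm at each point, and reconstruct $\haf A \bmod q$ by inverting a Vandermonde-type system. The delicate point here is that interpolation modulo $2^t$ requires the evaluation determinant to be a unit, which can be arranged by working over a ring extension of $\Z_q$ containing sufficiently many odd invertible elements; at the end one projects back to $R$.
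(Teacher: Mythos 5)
Your route differs from the paper's, and it has a genuine gap at its center. The paper never opens the Hirai--Namba black box: their Theorem~2.1 is already stated for symmetric matrices whose entries are \emph{univariate} polynomials over $\Z_{2^t}$, so the paper only needs to reduce the $r$-variate case to the univariate one. It does this by Kronecker substitution $X_1^{e_1}\cdots X_r^{e_r}\mapsto X^{e_1+De_2+\cdots+D^{r-1}e_r}$ with $D=n/2+1$ (larger than the degree of $\haf A$, so the map is invertible on the result), runs Hirai--Namba on the substituted matrix of univariate polynomials of degree at most $D^r$, and pulls back; the $(nD^r)^{O(t)}=n^{O(tr)}$ bound falls out immediately. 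Your proposal instead asserts that the Hirai--Namba algorithm is division-free over $\Z_q$ and can therefore be executed symbolically over $\Z_q[X_1,\ldots,X_r]$. You correctly identify this as the crux, but you only argue it is ``plausible''; nothing in the theorem statement you are invoking guarantees it, and without verifying the internals of their algorithm (which involves more than a combinatorial Pfaffian evaluation) the proof is incomplete. It is worth noting that if the division-free claim did hold, your truncate-at-degree-$n/2$ analysis would give $n^{O(t+r)}$, which is \emph{better} than the stated $n^{O(tr)}$ --- a sign that the easy part of your argument is fine and the entire burden rests on the unverified step.

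Your fallback is also not a repair as stated. Interpolating a degree-$(n/2)$ polynomial in $r$ variables modulo $2^t$ requires evaluation points whose Vandermonde-type determinant is a unit, i.e.\ odd; in $\Z_{2^t}$ the pairwise differences of any three points cannot all be odd, so you cannot find enough points in the base ring. Passing to a Galois ring $GR(2^t,m)$ with $2^m>n$ and using Teichm\"uller representatives can in principle supply the points, but then the numerical Hirai--Namba algorithm would have to be rerun over that extension ring rather than over $\Z_{2^t}$, which reintroduces exactly the lifting question you were trying to avoid. The Kronecker-substitution reduction sidesteps all of this, and I would recommend adopting it: it needs only the univariate polynomial-matrix version of Hirai--Namba as a black box.
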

  \begin{proof}
    For $r=1$, this result was established by Hirai and Namba~\cite[Theorem~2.1]{HiraiNamba}, 
    following~\cite{DBLP:journals/siamcomp/Valiant08,DBLP:journals/siamcomp/BjorklundH19}.
    Their algorithm runs in time $n^{O(t)}$ as required.
    For $r\ge 2$, we reduce the $r$-variate case to the univariate case using Kronecker substitution.

    For an integer $D$, later chosen to be sufficiently large, consider the Kronecker map
    \[
      \kappa\colon
      \Z_q[X_1,\ldots,X_r]\rightarrow
      \Z_{q}[X]\,.
    \]
    defined for each monomial as
    \begin{equation}\label{eq:kronecker}
      X_1^{e_1}X_2^{e_2}\cdots X_r^{e_r}\mapsto  X^{D^0e_1 + D^1e_2+\cdots+ D^{r-1}e_r}\,,
    \end{equation}
    and linearly extended to the entire polynomial ring.
    For any monomial $X_1^{e_1}X_2^{e_2}\cdots X_r^{e_r}$ with $e_i<D$ for all $i$, the exponent in~\eqref{eq:kronecker} satisfies $D^0e_1 + D^1e_2+\cdots+ D^{r-1}e_r< D^r$.
    For any given polynomial $p\in\Z_q[X_1,\ldots,X_r]$ whose indeterminates each have individual degree less than~$D$, the image $\kappa(p)$ can be computed in time $D^{O(r)}$.
    Likewise, if $p\in\Z_{q}[X]$ has degree at most $D^r-1$, the inverse $\kappa^{-1}(p)$ is well-defined and can be computed in time $D^{O(r)}$.
  
    Let $A=(a_{ij})$ be the input, that is, an $n\times n$ symmetric matrix of linear forms $a_{ij}$ from $\Z_q[X_1,\ldots, X_r]$ with $a_{ii}=0$ for all $i\in\{1,\dots,n\}$.
    Our algorithm for $\haf A\bmod q$ uses Kronecker substitution via the identity
    \[\haf A \bmod q=\kappa^{-1}(\haf\kappa(A)\bmod q)\,.\]
    This identity holds if the inverse $\kappa^{-1}$ of Kronecker substitution is well-defined for the final Hafnian $\haf\kappa(A)\bmod q$, which means we have to choose $D$ larger than the degree of $\haf A\bmod q$. This degree is at most $n/2$, so $D=n/2+1$ suffices.
    That is, our algorithm first applies the Kronecker map to each entry of $A$ to obtain the $n\times n$ matrix $\kappa(A)=(\kappa(a_{ij}))$ of polynomials from $\Z_{q}[X]$.
    This step takes time $O(n^2 D^r) \le n^{O(tr)}$.
    It then uses Hirai and Namba~\cite[Theorem~2.1]{HiraiNamba} on the resulting matrix of univariate polynomials of degree $\le D^r$ to compute $\haf(\kappa(A))\bmod q$, which takes time $(nD^r)^{O(t)}\le n^{O(tr)}$ as required.
    Finally, it computes the preimage of $\haf(\kappa(A))$ under Kronecker substitution, which takes time $D^{O(r)}\le n^{O(tr)}$.
    Overall, this algorithm computes $\haf A\bmod q$ in the required running time.
  \end{proof}
  
  \subsection{Counting colored matchings with demands mod $2^t$}\label{sec:color-demands}
  In this section, we prove \cref{lem:color-demands}, our algorithm for counting colored matchings with demands modulo~$2^t$.
  In a first step, we reduce from matchings to perfect matchings, that is, we reduce to an instance that demands zero isolated vertices.
  To this end, we use ODD/EVEN gadgets for the perfect matching problem (see~\cite[Lemma~2.24]{DBLP:phd/dnb/Curticapean15} following~\cite{DBLP:journals/siamcomp/Valiant08}) and moving the color demands for isolated vertices to the external edges of the gadget.
  \begin{figure}[tp]
    \centering
    \begin{tikzpicture}
      \path[fill=lightgray] (-0.5,-0.8) rectangle (6,-0.2);
      \node at (-1,-0.5) {$G$};
  
      \path[fill=lightgray] (0,0.7) rectangle (5.5,1.45);
      \node at (-1,1.075) {ODD$_n$};
  
      \node[ci]                  (v1) at (0, -0.5)  {};
      \node[ci]                  (v2) at (0.5, -0.5)  {};
      \node[ci]                  (v3) at (1.5, -0.5)  {};
      \node[ci]                  (v4) at (2.5, -0.5)  {};
      \node[ci]                  (vnn) at (5,-0.5)  {};
      \node[ci]                  (vn) at (5.5,-0.5)  {};
  
      \node[ci]                  (t1) at ($(0.5,1)+(210:0.3)$)  {};
      \node[ci]                  (t2) at ($(0.5,1)+(330:0.3)$)  {};
      \node[ci]                  (t3) at ($(0.5,1)+(90:0.3)$)  {};
  
      \path[draw] (t1) -- (t2) -- (t3) -- (t1);
  
      \node[ci]                  (t4) at ($(1.5,1)+(90:0.3)$)  {};
      \node[ci]                  (t5) at ($(1.5,1)+(210:0.3)$)  {};
      \node[ci]                  (t6) at ($(1.5,1)+(330:0.3)$)  {};
      
      \path[draw] (t3) -- (t4) -- (t5) -- (t6) -- (t4);
  
      \node[ci]                  (t7) at ($(2.5,1)+(90:0.3)$)  {};
      \node[ci]                  (t8) at ($(2.5,1)+(210:0.3)$)  {};
      \node[ci]                  (t9) at ($(2.5,1)+(330:0.3)$)  {};
      
      \path[draw] (t6) -- (t7) -- (t8) -- (t9) -- (t7);
  
      \node[ci]                  (tn1) at ($(5,1)+(90:0.3)$)  {};
      \node[ci]                  (tn2) at ($(5,1)+(210:0.3)$)  {};
      \node[ci]                  (tn3) at ($(5,1)+(330:0.3)$)  {};
  
      \path[draw,dotted] (t9) -- (tn1);
      \path[draw] (tn1) -- (tn2) -- (tn3) -- (tn1);
  
      \path[draw,dotted,shorten >=4pt,shorten <=5pt] (v4) -- (vnn);
  
      \path[draw] (v1) edge node[left] {$\gamma_{1}$} (t1);
      \path[draw] (v2) edge node[right] {$\gamma_{2}$} (t2);
      \path[draw] (v3) edge node[right] {$\gamma_{3}$} (t5);
      \path[draw] (v4) edge node[right] {$\gamma_{4}$} (t8);
      \path[draw] (vnn) edge node[left] {$\gamma_{n-1}$} (tn2);
      \path[draw] (vn) edge node[right] {$\gamma_{n}$} (tn3);
  
    \end{tikzpicture}
    \caption{\label{fig:ODD-gadgets}%
    The gadget ODD$_n$ for \cref{lem:odd-even} is depicted at the top and the graph~$G$ with its~$n$ vertices at the bottom.
    Any odd-sized selection of edges $\gamma_i$ can be extended in a unique way to a perfect matching on ODD$_n$.}
  \end{figure}
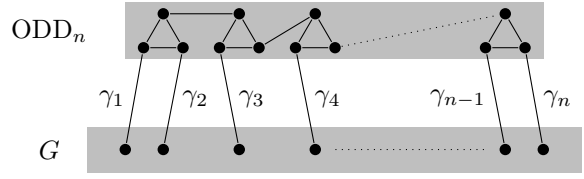
  \begin{lemma}\label{lem:odd-even}%
    Given any instance $(G,c,D_I,D_E)$ with colors~$C$, we can compute in polynomial time $n + 1$ instances $(G',c',D_I',D_E^\ell)$ for $\ell\in\{0,\ldots, n\}$ with colors~$C'=C\dotcup\set{\ast}$, such that $D_I'(i)=0$ holds for all $i\in C'$ and $\abs{\mathcal M(G,c,D_I,D_E)}=\sum_{\ell\in\set{0,\dots,n}}\abs{\mathcal M(G',c',D_I',D_E^\ell)}$. 
  \end{lemma}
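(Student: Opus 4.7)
The plan is to attach an ODD-gadget (as depicted in the figure above) to $G$ via fresh \emph{attachment edges} $\gamma_v$ for each $v \in V(G)$, where $\gamma_v$ connects $v$ to a distinct external vertex of the gadget. The key property of the ODD-gadget is that, for any subset~$S$ of its external vertices of the appropriate parity, there is exactly one perfect matching of its internal graph with precisely~$S$ matched externally. Consequently, every matching~$M$ of~$G$ extends uniquely to a perfect matching of the combined graph~$G'$ on vertex set $V(G) \cup V(\text{ODD}_n)$: take $E(M) \cup \{\gamma_v : v \in I(M)\}$ together with the unique gadget-internal matching. (If $n = |V(G)|$ has the wrong parity, one adjoins a single dummy vertex with its own attachment to fix this.)

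For the coloring $c'$ with $C' = C \dotcup \{\ast\}$, I would retain the colors $c(e)$ on the edges of~$G$, assign the color set $c(v)$ to each attachment edge~$\gamma_v$ (so that a coloring of $\gamma_v$ encodes the color that~$v$ would receive as an isolated vertex), and color every internal gadget edge with $\{\ast\}$. Setting $D_I'(i) = 0$ for every $i\in C'$ then forces permissible matchings of~$G'$ to be perfect. For each $\ell \in \{0,\ldots,n\}$ I would define $D_E^\ell(i) = D_E(i) + D_I(i)$ for $i \in C$ and $D_E^\ell(\ast) = \ell$, so that~$\ell$ fixes the number of $\ast$-colored (gadget-internal) edges. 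Since this number is an affine function of $|I(M)|$, each value of~$\ell$ corresponds to matchings of a specific size in~$G$, and summing over all $\ell \in \{0,\ldots,n\}$ accounts for every matching size (values of $\ell$ with the wrong parity contribute zero).

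The main obstacle is to make the count exact: in the naive setup, a coloring of a perfect matching of~$G'$ could satisfy the combined demand $D_E(i)+D_I(i)$ in several ways by trading a color~$i$ between an original edge of~$M$ and an attachment edge $\gamma_v$, even though the original demands~$D_E$ and~$D_I$ are separate. I would eliminate this exchange by introducing shadow vertices: for each pair $(v,i)$ with $v \in V(G)$ and $i \in c(v)$, add a shadow vertex~$v^i$ joined to~$v$ by an edge $\beta_v^i$ of color~$\{i\}$, and attach the ODD-gadget to the shadow vertices $\{v^i\}$ rather than directly to~$V(G)$. Then every attachment edge carries a single fixed color, and the map sending a permissible pair $(M,c_M)$ to the perfect matching $E(M) \cup \{\beta_v^{c_M(v)} : v \in I(M)\} \cup (\text{gadget matching})$ is a color-count-preserving bijection between permissible solutions for $(G,c,D_I,D_E)$ and permissible perfect matchings for $(G',c',D_I',D_E^\ell)$ summed over~$\ell$. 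The claimed identity then follows directly from this bijection.
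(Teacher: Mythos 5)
Your construction is, up to the shadow vertices, exactly the paper's: attach an ODD/EVEN gadget through one attachment edge per vertex of~$G$, let the attachment edge of~$v$ carry the permitted colors~$c(v)$, give the gadget-internal edges a fresh color~$\ast$, set $D_I'\equiv 0$ and $D_E^\ell(i)=D_E(i)+D_I(i)$, and let~$\ell$ range over the possible numbers of $\ast$-edges. The ``trading'' issue you raise is a genuine one for the lemma \emph{as stated}: if some color~$i$ is permitted both on an edge and on a vertex of~$G$, a colored perfect matching of~$G'$ can meet the merged demand $D_E(i)+D_I(i)$ with the wrong split (say $D_E(i)+1$ original edges and $D_I(i)-1$ attachment edges of color~$i$); such matchings are counted by $\sum_\ell\abs{\mathcal M(G',c',D_I',D_E^\ell)}$ but correspond to nothing in $\mathcal M(G,c,D_I,D_E)$, so the extension map is not surjective onto the right set. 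The paper's proof does not address this point at all; it is harmless only because in the one place the lemma is invoked (step~C2 of ModCount) the colors permitted on vertices and on edges are disjoint by construction (subsets of~$R$ versus one- or two-element sets of such subsets).

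However, your shadow-vertex repair does not close the gap you identified. The edge $\beta_v^i$ still carries the color $i\in C$, so it contributes to exactly the same tally that the demand $D_E^\ell(i)=D_E(i)+D_I(i)$ constrains; nothing in the modified instance distinguishes ``color~$i$ on an original edge'' from ``color~$i$ on a shadow edge'', and every wrong-split perfect matching survives verbatim (replace $\gamma_v$ colored~$i$ by $\beta_v^i$). The bijection asserted in your last sentence is therefore false in general. To genuinely separate the two demands you need \emph{fresh} colors: give the attachment edge of~$v$ the permitted set $\setc{i^\ast}{i\in c(v)}$ for new symbols $i^\ast\notin C$ (no shadow vertices needed --- the chosen color of $\gamma_v$ already encodes $c_M(v)$), and demand $D_E^\ell(i)=D_E(i)$ and $D_E^\ell(i^\ast)=D_I(i)$ separately. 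This doubles $\abs{C}$, which is still compatible with the $n^{O(t\abs{C})}$ bound of \cref{lem:color-demands}, but it changes the conclusion of the present lemma (which promises $C'=C\dotcup\set{\ast}$); alternatively, add the hypothesis that vertex- and edge-color sets are disjoint, which is what the application actually provides.
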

  \begin{proof}
    Without loss of generality, let $V(G)=\set{1,\dots,n}$.
    We sketch the proof in the case that~$n\ge 3$ is an odd integer.
    We construct~$G'$ as depicted in \cref{fig:ODD-gadgets}, by starting with~$G$ and adding the gadget ODD$_n$ from~\cite[Lemma~2.24]{DBLP:phd/dnb/Curticapean15} to it: In particular, we add a sequence of $n-2$ disjoint triangles connected to each other in a line; this leaves $n$ vertices of degree~$2$ in the gadget. We join the~$j$-th such vertex of degree~$2$ with the vertex $j\in V(G)$ using an edge~$\gamma_{j}$.
    This concludes the definition of~$G'$.
  
    The gadget ODD$_n$ has the property that every partial matching on its external edges can be extended in at most one way to a perfect matching on its internal edges.
    In particular, such an extension is possible if and only if the partial matching on the external edges has an odd number of edges.
  
    We define $c'(v)=\emptyset$ for all $v\in V(G')$, and we let $C'=C\dotcup\set{\ast}$, where $\ast$ is the only new color we add to~$C$. For all $e\in E(G')$, we define
    \begin{align*}
      c'(e)=
      \begin{cases}
        c(e);& \text{if $e\in E(G)$;}\\
        c(j);& \text{if $e=\gamma_j$ for some~$j\in V(G)$;}\\
        \set{\ast};& \text{if $e$ is an internal edge of ODD$_n$.}
      \end{cases}
    \end{align*}
    We define $D_I'(i)=0$ and $D_E^\ell(i)=D_I(i)+D_E(i)$ for all $i\in C$ and $\ell\in\set{0,\dots,n}$.
    Moreover, we set $D_E^\ell(\ast)=\ell$.
    We now have $\abs{\mathcal M(G,c,D_I,D_E)}=\sum_\ell\abs{\mathcal M(G',c',D_I',D_E^\ell)}$: Every colored matching~$(M,c_M)$ in $\mathcal M(G,c,D_I,D_E)$ with~$\ell$ isolated vertices can be extended in a unique way to a colored perfect matching from $\mathcal M(G',c',D_I',D_E^\ell)$, and this mapping is surjective.
  
    When $n$ is not an odd integer, the result follows in an analogous way by using an EVEN$_n$ gadget. (Note that no single gadget can ``absorb'' both even and odd numbers of unmatched vertices. Hence we need to distinguish between even and odd $n$ and choose gadgets accordingly.)
  \end{proof}
  
  We are now in position to prove \cref{lem:color-demands}, our algorithm for computing modulo~$2^t$ the number of colored matchings that satisfy a given demand.
  \begin{proof}[Proof of \cref{lem:color-demands}]
    By \cref{lem:odd-even}, we can assume without loss of generality that the instance~$(G,c,D_I,D_E)$ with colors~$C$ satisfies $D_I(i)=0$ for all~$i\in C$.
    Then $\mathcal M(G,c,D_I,D_E)$ contains only pairs $(M,c_M)$ where~$M$ is a perfect matching of~$G$.
    Suppose further that~$C=\set{1,\dots,r}$ holds for some~$r$, and define the $(n\times n)$-matrix~$A$ with
    \[
      A_{u,v}=
      \begin{cases}
        \sum_{i\in c(\set{u,v})} X_i;
        &\text{if $\{u,v\}\in E(G)$;}\\
        0;&\text{otherwise.}
      \end{cases}
    \]
    This matrix is symmetric, has zero diagonal, and the entries that are linear forms. Therefore, we can use \cref{thm:haf} to compute $\haf A$ as an element of $\Z_{2^t}[X_1,\dots,X_r]$, and we output the coefficient of the monomial $\prod_{i\in C} X_i^{D_E(i)}$.
    This algorithm runs in time $n^{O(tr)}=n^{O(t\abs{C})}$ as required.
  
    To see that the algorithm is correct, observe that every perfect matching~$M$ of~$G$ contributes the following term to the sum in~$\haf A$:
    \[
      \prod_{\{u,v\}\in M} \paren*{\sum_{i\in c(\set{u,v})} X_i}
      =
      \sum_{c'\colon E(M)\to C} \prod_{\{u, v\}\in M} X_{c'(\set{u,v})}\,.
    \]
    Thus, any pair $(M,c_M)$ satisfies the demands~$D_I,D_E$ if and only if the monomial corresponding to this pair is equal to~$\prod_{i\in C} X_i^{D_E(i)}$.
    Therefore, the coefficient of this monomial in~$\haf A$ is $\abs{\mathcal M(G,c,D_I,D_E)}\bmod 2^t$.
  \end{proof}

    We prove that ModCount satisfies the properties stated in \cref{thm: algorithm-split}.
    \begin{proof}[Proof of \cref{thm: algorithm-split}]
      We claim that ModCount$(H,G,t)$ counts the $H$-isomorphic subgraphs of $G$ modulo $2^t$ in time $n^{O(t4^s)}$.
    
      For the running time, note that C1 takes time $h^{O(s)}$ by \cref{lem:rigidize}.
      Moreover, C2 makes at most $n^s$ queries to \cref{lem:color-demands}, each of which takes time $n^{O(t\abs{C})}$.
      C3 takes time $n^{O(s)}$.
      Since $h\le n$ and $\abs{C}\le 2^s+2^s+2^{2s}\le O(4^s)$ hold, this leads to an overall running time of at most $n^{O(t4^s)}$ as claimed.
    
      For the correctness of the algorithm, it is sufficient to prove the identity:
      \begin{align}
        \#\Sub(H,G)
        &=
        \sum_{S\in\binom{V(G)}{s}}
        \sum_{\sigma\in E_S}
        \abs[\Big]{\mathcal M(G-S,c_{\sigma},D_I,D_E)}\,.
      \end{align}
      We prove this claim by constructing a bijection~$B$ between $H$-isomorphic subgraphs of~$G$ and 
      tuples $(S,\sigma,M,c_M)$ with $S\in\binom{V(G)}{s}$, $\sigma\in E_S$, and $(M,c_M)\in\mathcal 
      M(G-S,c_{\sigma},D_I,D_E)$.
      Let $F$ be a subgraph of~$G$ that is isomorphic to~$H$; we define 
      $B(F)=(S,\sigma,M,c_M)$ as follows.
    
      Let $\sigma'\in\Emb(H,F)$ be some embedding of $H$ in $F$, and let $S=\sigma'(R)$ be the image of the rigid splitting set~$R$ for~$H$.
      By assumption, because~$R$ is rigid, it is mapped to itself under any automorphism of~$H$. Since~$F$ and~$H$ are isomorphic, this implies that $S$ does not depend on the choice of the embedding~$\sigma'$.
    
      Next, we set $\sigma$ as the unique $\sigma\in E_S$ that is equivalent to $\sigma'\restrict[R]$ with respect to the group action of $\Aut(H)$ on $\Emb(H[R],G[S])$.
      Such a $\sigma$ exists and is unique by definition of~$E_S$.
    
      It remains to define the matching~$M$ and its vertex-coloring~$c_M$.
      We define~$M=F-S$.
      Since $S$ is a splitting set of~$F$, the graph~$M$ is indeed a matching.
      We define $N'_v=\sigma^{-1}(N_F(v) \cap S)$ for all $v\in V(M)$.
      For each isolated vertex $v\in I(M)$, we define the color~$c_M(v)$ via $c_M(v)=N'_v$.
      Moreover, for each isolated edge $\set{u,v}\in E(M)$, we define the color~$c_M(v)$ via $c_M(\set{u,v})=\set{N'_u, N'_v}$.
    
      We have now defined a mapping~$B$, and we need to argue that the image $B(F)$ of every $H$-isomorphic subgraph is a tuple~$(S,\sigma,M,c_M)$ of the appropriate type.
      Clearly, $S\in\binom{V(G)}{s}$ and $\sigma\in E_S$ hold by definition.
      It remains to prove that $(M,c_M)\in\mathcal M(G-S,c_{\sigma},D_I,D_E)$ holds.
    
      Indeed, $M$ is a matching in~$G-S$, that is, a subgraph of maximum degree~$1$.
      Moreover, $c_M$ sets colors that are permitted by $c_\sigma$:
      To see this, note that $N_v=\sigma^{-1}(N_G(v) \cap S)\supseteq \sigma^{-1}(N_F(v) \cap S)=N'_v$, and recall that $c_\sigma(v)$ contains all subsets~$N'$ of~$N_v$ and that $c_\sigma(\set{u,v})$ contains all sets $\set{N,N'}$ of subsets $N\subseteq N_u$ and $N'\subseteq N'_v$.
      Therefore, $c_M(o)\in c_\sigma(o)$ holds for all $o\in I(G)\cup E(G)$ and thus $c_M$ is a permissible coloring.
    
      We prove that $(M,c_M)$ satisfies the demands~$D_I,D_E$: Let~$N\subseteq R$ be arbitrary. We claim that $D_I(N)$ is equal to the number of isolated vertices~$v\in I(M)$ with $c_M(v)=N$.
      Since~$H$ and~$F$ are isomorphic and $M=F-S$ holds, the vertices of~$H-R$ and~$M$ are in a bijective correspondence that maintains the colors.
      More precisely, we can extend~$\sigma$ to an embedding~$\sigma''\in\Emb(H,F)$ such that all vertices $v\in V(H-R)$ and their respective images~$v'=\sigma''(v)\in V(F-S)$ satisfy $N_H(v)\cap R = \sigma^{-1}(N_F(v') \cap S)$.
      Even more, this~$\sigma''$ can be chosen so that it preserves the edge colors as well, that is, $\set{u,v}\in E(H-R)$ map to $\set{u',v'}\in E(F-S)$ under $\sigma''$ with the properties
      $N_H(u)\cap R = \sigma^{-1}(N_F(u') \cap S)$ and
      $N_H(v)\cap R = \sigma^{-1}(N_F(v') \cap S)$.
      It follows that $(M,c_M)$ satisfies the demands~$D_I,D_E$ and thus that $(M,c_M)\in\mathcal M(G-S,c_\sigma,D_I,D_E)$ holds and $B(F)=(S,\sigma,M,c_M)$ has the required type.
    
      Finally, it remains to argue that~$B$ is a bijection.
      For injectivity, let~$F$ and $F'$ be two $H$-subgraphs of~$G$, and let
      $B(F)=(S,\sigma,M,c_M)$ and $B(F')=(S',\sigma',M',c_{M'})$.
      Suppose that $B(F)=B(F')$ holds.
      We claim~$F=F'$.
      Indeed, since $S=S'$ and $\sigma=\sigma'\in E_S$ holds, we have 
      $F[S]=F'[S]$.
      Moreover, since $M=M'$ and $V(M)=V(F)\setminus S$, we have $F[V(M)]=F'[V(M)]$.
      It remains to argue that the edges between $V(M)$ and $S$ are identical in~$F$ 
      and $F'$.
      Let $v\in V(M)$.
      Since $c_M(v)=c_{M'}(v)$, we have $\sigma^{-1}(N_F(v)\cap S)=\sigma^{-1}(N_{F'}(v)\cap S)$. Since $\sigma$ is an isomorphism, this implies~$N_F(v)\cap S=N_{F'}(v)\cap S$. Overall, we get $F=F'$ as claimed, and $B$ is injective.
    
      For surjectivity of~$B$, let $(S,\sigma,M,c_M)$ be a tuple of the appropriate type, that is, $S\subseteq V(G)$ with $\abs{S}=s$, $\sigma\in E_S$, and $(M,c_M)\in\mathcal M(G-S, c_\sigma, D_I, D_E)$.
      We construct an $H$-isomorphic subgraph~$F$ of~$G$ with $B(F)=(S,\sigma,M,c_M)$.
      The vertex set of~$F$ is $S\cup V(M)$ and the edges on $V(M)$ are given by~$M$, so that $F=M-S$ holds.
      Since $\sigma\in E_S\subseteq \Emb(H[R], G[S])$, we can define edge set of~$F[S]$ as the image of the edge set of $H[R]$ under $\sigma$, that is, we have the edges $\set{\sigma(u),\sigma(v)}\in E(F[S])$ for all $\set{u,v}\in E(H[R])$.
      It remains to define the edge set between~$F[S]$ and $F-S$.
      For each $v\in V(F-S)$, we add the edges to all vertices in $\sigma(c_M(v))$.
      Now~$F$ is an $H$-isomorphic subgraph of~$G$ and $B(F)=(S,\sigma,M,c_M)$ holds, and we have proved the claim.
    \end{proof}

\section{Omitted proofs from \cref{sec: hardness-paths}}
    
    \begin{proof}[Proof of \cref{lem:dipath-flexibility}]
      Let $(G,s,t,k)$ be the input for the reduction.
      We construct the output $(G',s',t',k')$ as follows:
      Set $s'=s$ and $k'=f(k)+1$.
      We obtain $G'$ from $G$ by adding new vertices $v_1,\ldots, v_{k'-k}$ forming a directed path and all shortcuts from $t$.
      More precisely, we add the edges of the form $(t, v_j)$ and $(v_j,v_{j+1})$.
      \[
        \begin{tikzpicture}
          \node (s) at (1,0) [circle, fill, inner sep=2pt, label=left:{$s = s'$}] {};
          \node (t) at (3,0) [circle, fill, inner sep=2pt, label=below:$t$] {};
          \node (v_1) at (4,0) [circle, fill, inner sep=2pt, label=below:$v_1$] {};
          \node (v_2) at (5,0) [circle, fill, inner sep=2pt, label=below:$v_2$] {};
          \node  at (6,0)  {$\cdots$};
          \node (v_{k'-k-1}) at (7,0) [circle, fill, inner sep= 2pt, label=below:$v_{k'-k-1}$] {};
          \node (t_) at (8,0) [circle, fill, inner sep=2pt, label=right:{$t'=v_{k'-k}$}] {};
          \draw [-latex, dashed] (s) -- (t);
          \draw [-latex] (t) -- (v_1);
          \draw [-latex] (v_1) -- (v_2);
          \draw [-latex] (v_{k'-k-1}) -- (t_);
          \draw [-latex] (t) edge [bend left] (v_2);
          \draw [-latex] (t) edge [bend left] (v_{k'-k-1});
          \draw [-latex] (t) edge [bend left] (t_);
        \end{tikzpicture}
      \]
      We let $t'$ be the final vertex $v_{k'-k}$ on this path.
      We note that for all $j\in\{1,\ldots, k'-k\}$ there is a unique $t,t'$-path of length $j$.
    
      We now establish a bijection~$\pi$ between $s,t$-paths in~$G$ of some 
      length~$i\in\{k,\ldots, f(k)\}$ and $s',t'$-paths in $G'$ of length~$k'$:
      Let $p$ be an $s,t$-path of length~$i$, and set $j=k'-i$.
      Clearly $j\in\{1,\ldots, k'-k\}$ and so there is a unique $t,t'$-path~$q_j$ of 
      length~$j$,
      We define $\pi(p)= p q_j$.
      Then $\pi(p)$ is an $s',t'$-path of length exactly~$k'$, and the 
      function~$\pi$ is obviously injective.
      For the surjectivity, let $p'$ be any $s',t'$-path of length exactly~$k'$.
      Since $t$ separates $s$ from $t'$, $p'$ can be written at $p'=pq_j$ for some 
      $s,t$-path~$p$.
      Since $q_j$ is the only $t,t'$-path of length~$j$ and $j$ must be between 
      $1$ and $k'-k$, we have that $p$ has a length in $\{k,\ldots, k'-1\}$ so that 
      $\pi(p)=p'$.
      Thus, $\pi$ is bijective.
      This means that the input instance~$(G,s,t,k)$ for \pp{Directed $f$-Flexible Path} has the same 
      number of solutions as the output instance~$(G',s',t',k')$ for \pp{Directed Path}.
    \end{proof}
    
    \begin{proof}[Proof of \cref{lem:dipath-to-path}]
      The proof follows~\cite[Lemma~22]{FlumGrohe}, observing that their construction works also if their parameters are chosen as $p=1$ and $q=(k+1)$.
      This makes the argument somewhat simpler and allows a precise bound on the parameters of the resulting instance size.
    
      For a directed graph $G$, let $G'$ be the undirected graph obtained from~$G$ by the following steps:
      \begin{enumerate}
        \item
          Replace each vertex $v$ of $G$ by two adjacent vertices $v'$ and $v''$.
          Call such an edge type 1.
        \item 
          Replace each arc $\{u,v\}$ in $G$ by a $(k+1)$-path with endpoints $u''$ and $v'$. 
          Call such a path type 2.
      \end{enumerate}
      An $s,t$-path of length $k$ in $G$ corresponds to an $s',t'$-path of length $k(k+2)$ in $G'$.
      In fact, every $s',t'$-path in $G'$ that alternates between paths of both types of total length $\ell(k+2)$ corresponds to exactly one $\ell$-path in $G$.
      Conversely, a path in $G'$ that does not alternate between both types must use more type 2 paths, since type 1 paths are never adjacent.
      Thus, consider an $s',t'$-path in $G'$ that consists of $\ell_i$ paths of type $i\in\{1,2\}$, with $\ell_1<\ell_2$.
      Its length is \[
        \ell_1+\ell_2(k+1)  \begin{cases}
          \leq k-1 + k(k+1) < k(k+2)\,, & \text{if $\ell_2\leq k$}\,;\\
          \geq (k+1)(k+1)  = k^2 +2k+1 > k(k+2) \,,& \text{if $\ell_2>k$}\,.
        \end{cases}
      \]
      We conclude that the number of $s,t$-paths of length $k$ in $G$ equals the number of $s',t'$-path of length $k(k+2)$ in $G'$.
    \end{proof}
\end{document}